\newcommand{\defn}[1]{\emph{#1}}
\setlist{noitemsep}
\tikzstyle{grijs}=[fill=none, draw=none, text=gray]
\tikzstyle{arrow}=[->, thick]
\tikzstyle{dik}=[-, thick]
\tikzstyle{grey dashed}=[-, draw={rgb,255: red,168; green,168; blue,168}, thick]
\tikzstyle{grey}=[-, draw={rgb,255: red,180; green,180; blue,180}, thick, dashed]
\tikzstyle{dik filled}=[-, thick, fill={rgb,255: red,189; green,241; blue,255}, tikzit fill={rgb,255: red,189; green,241; blue,255}]
\tikzstyle{heel dik}=[-, line width=20pt, draw=yellow, fill=yellow, rounded corners]
\tikzstyle{stippel rood}=[-, thick, densely dotted, draw=red]
\tikzstyle{stippel blauw}=[-, draw={rgb,255: red,1; green,151; blue,246}, thick, dashed]
\tikzstyle{pijl blauw}=[->, draw=blue, thick]
\tikzstyle{pijl groen}=[draw=olive, ->, thick]
\tikzstyle{dik oranje}=[-, draw={rgb,255: red,255; green,128; blue,0}, thick]
\tikzstyle{pijl grijs}=[draw=lightgray, thick, ->]
\tikzstyle{ellipsis}=[-, thick, densely dotted]
\tikzstyle{vul zwart}=[-, fill=black]
\newcommand{\pA}{\mathcal{A}}
\newcommand{\pB}{\mathcal{B}}
\newcommand{\pC}{\mathcal{C}}
\newcommand{\E}{\mathcal{E}}
\newcommand{\F}{\mathcal{F}}
\newcommand{\cA}{\mathcal{A}}
\newcommand{\cB}{\mathcal{B}}
\newcommand{\cC}{\mathcal{C}}
\newcommand{\GHZ}{\mathrm{GHZ}}
\DeclareMathOperator{\switch}{\textsc{switch}}
\DeclareMathOperator{\id}{id}
\DeclareMathOperator{\CHSH}{CHSH}
\DeclareMathOperator{\BC}{BC}
\renewcommand{\H}{\mathcal{H}}
\newcommand{\K}{\mathcal{K}}
\newcommand{\R}{\mathbb{R}}
\renewcommand{\a}{\alpha}
\renewcommand{\b}{\beta}
\newcommand{\la}{\lambda}
\newcommand{\ga}{\gamma}
\let\phi\varphi
\newcommand{\tns}{\otimes}
\DeclareMathAlphabet{\mathbfit}{OT1}{lmr}{b}{sl}
\let\vec\mathbfit
\newcommand{\implshort}[1]{\ensuremath{\Longrightarrow_{#1}}}
\newcommand{\impl}[1]{\ \ \implshort{#1}\ \ }
\theoremstyle{definition}
\newtheorem{definition}{Definition}
\newtheorem{assumption}{Assumption}
\newtheorem{fact}{Observation}
\theoremstyle{plain}
\newtheorem{theorem}{Theorem}
\newtheorem*{theorem*}{Theorem}
\newtheorem{lemma}{Lemma}
\newtheorem*{claim*}{Claim}
\newtheorem{corollary}{Corollary}
\newcommand{\indep}{\mathrel{\raisebox{0.05em}{\rotatebox[origin=c]{90}{$\models$}}}}
\begin{document}

    \title{Possibilistic and maximal indefinite causal order in the quantum switch}
    \author{Tein van der Lugt}
    \affiliation{Department of Computer Science, University of Oxford, Wolfson Building, Parks Road, Oxford OX1 3QD, United Kingdom}
    \email{tein.van.der.lugt@cs.ox.ac.uk}
    \author{Nick Ormrod}
    \affiliation{Department of Computer Science, University of Oxford, Wolfson Building, Parks Road, Oxford OX1 3QD, United Kingdom}
    \affiliation{Perimeter Institute for Theoretical Physics, 31 Caroline Street North, Waterloo, Ontario N2L 2Y5, Canada}
    \email{normrod@perimeterinstitute.ca}

    \begin{abstract}
        \noindent It was recently found that the indefinite causal order in the quantum switch can be certified device-independently when assuming the impossibility of superluminal influences.
        Here we strengthen this result in two ways.
        First, we give a proof of this fact which is possibilistic rather than probabilistic, i.e.\ which does not rely on the validity of probability theory at the hidden variable level.
        Then, returning to the probabilistic setting, we show that the indefinite causal order in the quantum switch is also maximal, in the sense that the observed correlations are incompatible even with the existence of a causal order on only a small fraction of the runs of the experiment.
        While the original result makes use of quantum theory's violation of a Clauser-Horne-Shimony-Holt inequality, the proofs presented here are based on Greenberger, Horne, and Zeilinger's and Mermin's proofs of nonlocality, respectively.
    \end{abstract}

    \maketitle

    When two interventions are performed within the context of a quantum circuit, the topology of the circuit puts constraints on the causal relations between them: either the choice of operation at $A$ can influence the outcome of the operation at $B$, or vice versa, but not both. Recent extensions to the quantum circuit paradigm have however facilitated considering situations in which the topology of the quantum circuit itself, and in particular the causal order between operations, is controlled coherently. The simplest example of this is the \emph{quantum switch}~\cite{CDPV13}, which applies two operations to a target system in an order coherently controlled by the state of a qubit.

It was recently found that under some metaphysical assumptions, the indefinite causal order between operations in the quantum switch can also be demonstrated device-independently, that is, relying just on correlations between classical settings and outcomes of the operations, and not on a characterisation of the operations themselves (or indeed the assumption that they are governed by quantum theory)~\cite{vdLBC23,GP23,DASB24}.
The argument of~\cite{vdLBC23} works via the violation of an inequality derived from assumptions named definite causal order, relativistic causality, and free interventions.
This derivation relies closely on Bell's theorem: the crucial observation is that a hidden variable determining a causal order (which exists by the first assumption) in some particular cases must also fix, or, in other words, `predetermine', a measurement outcome---and that this determinism, together with the other assumptions, implies Bell inequalities which are violated by the quantum correlations under consideration.

This observation suggests that it is possible to transform other proofs of Bell nonlocality, beyond that based on the Clauser-Horne-Shimony-Holt (CHSH) inequality employed in~\cite{vdLBC23}, into proofs of indefinite causal order in the quantum switch under assumptions of relativistic causality and free interventions.
Here we give a proof in the style of Mermin's version of Greenberger-Horne-Zeilinger (GHZ) nonlocality~\cite{GHZ89, Mer90, GHSZ90}.
Similarly to the latter, the proof does not involve inequalities and is of a possibilistic, rather than probabilistic, character:
the quantum switch data are shown to be incompatible with definite causal order, relativistic causality, and free interventions even when these notions are formulated merely in terms of which events have zero and nonzero probabilities.
In particular, this shows that the quantum switch is incompatible with a relativistically well-behaved definite causal order even if causal orders are not required to follow the laws of probability theory.

We also give a statistical inequality, analogous to Mermin's inequality~\cite{Mer90a} for Bell nonlocality, which parallels our possibilistic argument.
Although this takes us back to assuming the validity of probability theory, it allows us to improve on the results of~\cite{vdLBC23,GP23} in a different way, namely by showing that the quantum switch exhibits `maximal' indefinite causal order.
More precisely, the correlations obtained from the quantum switch violate this inequality to the algebraic maximum, which implies incompatibility even with the hypothesis that there is a definite causal order on only some positive fraction of the runs of the experiment.

    \section{Notation and terminology}
    \label{sec:notation}
    The first part of this paper relies only on statements about possibilities, rather than about probabilities.
Possibilities are modelled by the Boolean semiring ${\mathbb{B} = \{0,1\}}$, where $0$ denotes impossibility and $1$ possibility of an event.
Multiplication and addition in $\mathbb{B}$ are defined as $0\cdot 0 = 0\cdot 1=1\cdot 0=0$, $1\cdot 1 = 1$, $0+0=0$, $0+1=1+0=1$, and $1+1 = 1$.
A \defn{possibility distribution} over a set of variables $a_1,\dots,a_n$ taking values in the finite sets $A_1,\dots,A_n$ is a function
$p: A_1\times\cdots\times A_n \to \mathbb{B}$ satisfying
$\sum_{a_1,\dots,a_n} p(a_1 \cdots a_n) = 1$.
Note that summation here is in $\mathbb B$, where $1+1=1$; thus, this expression means that at least one tuple of values is possible.
(We will abuse notation by using lower-case letters to denote random variables as well as their values.)
A \emph{probability} distribution $P:A\to\R_{\geq0}$ induces a possibility distribution $p$ over the same variable by $p(a) = \pi(P(a))$, where $\pi:\R_{\geq 0} \to \mathbb B$ takes $0$ to $0$ and any $r > 0$ to $1$.

(Conditional) independence of possibilities is denoted by $\indep_{(\cdot)}$.
In the case of a possibility distribution $p(abc)$, $a\indep_p b \mid c$ is defined as $\forall a,b,c: p(abc) = p(ac)p(bc)$, while unconditional independence $a\indep_p b$ means that $\forall a,b : p(ab)=p(a)p(b)$.
This essentially expresses that if the values $a,b$ are both possible (together with $c$), then they are also jointly possible; note that this is strictly weaker than probabilistic independence.
This notation is discussed in more detail in Appendix~\ref{app:independence-notation}.

The arrow $\Longrightarrow$ denotes implication: for propositions (i.e.\ binary variables) $P$ and $Q$, $P\Longrightarrow_p Q$ is short for $p(P,\neg Q) = 0$.
Finally, $\oplus$ denotes addition modulo 2, $\ket{\pm} \coloneqq {(\ket0 \pm \ket1)/\sqrt2}$, and $\ket{\pm i} \coloneqq (\ket0 \pm i\ket1)/\sqrt2$.

    \section{GHZ-Mermin nonlocality}
    \label{sec:vanilla-ghz}
    We first review Mermin's version~\cite{Mer90} of the GHZ proof~\cite{GHZ89} of Bell nonlocality.
Three spacelike-separated parties labelled $A$, $B$, and $C$ share three qubits prepared in the state $\ket\GHZ \coloneqq (\ket{000}+\ket{111})/\sqrt2$.
Each party has a binary classical input variable ($x,y,z\in\{0,1\}$, resp.), and measures their qubit in the $Y$ basis $\{\ket{+i},\ket{-i}\}$ if their input is 0, and in the $X$ basis $\{\ket+,\ket-\}$ if their input is 1.
The measurement outcomes are recorded in the output variables $a,b,c\in\{0,1\}$.
Given a probability distribution over the input variables, the Born rule provides joint probabilities $P(abcxyz)$ for all combinations of input and output variables. All that the following argument relies on are however the induced \emph{possibilities} $p(abcxyz) \coloneqq \pi(P(abcxyz))$.%
\footnote{Conditional distribution notation like $P(abc|xyz)$ is more common in the literature on Bell nonlocality and indefinite causal order. We adopt a convention in which all variables are treated equal, both because this requires fewer a priori assumptions (e.g.\ about whether all combinations of settings are possible, see Eq.~\eqref{eq:all-inputs-possible}), and because it is more natural in the possibilistic view: $x$, like $a$, is a variable that takes one and only one value on each single trial of the experiment, and possibility distributions tell us which values of $x$ and $a$ can occur jointly in a single trial. See also Appendix~\ref{app:independence-notation}.}

A particular property of these possibilities, predicted by the Born rule, is that
\begin{equation}\label{eq:ghz-cors}
    \begin{split}
        x=1,y=1,z=1 \impl{p} a\oplus b \oplus c = 0; \\
        x=1,y=0,z=0 \impl{p} a\oplus b \oplus c = 1; \\
        x=0,y=1,z=0 \impl{p} a\oplus b \oplus c = 1; \\
        \text{ and\quad} x=0,y=0,z=1 \impl{p} a\oplus b \oplus c = 1.
    \end{split}
\end{equation}
These perfect correlations can be used in an EPR-like argument~\cite{EPR35} to argue for the existence of local deterministic hidden variables, i.e.\ to argue that at each wing of the experiment, there are pre-existing physical properties that determine the outcome for any possible measurement setting on that wing. Mathematically, these are described by variables $\la_{A,B,C}^{0,1}\in\{0,1\}$, independent of $x,y,z$ and satisfying $x = i \implshort{p} a = \la_A^i$ and similar conditions for $B$ and $C$.

Together with the assumption that all combinations of inputs appearing in~\eqref{eq:ghz-cors} are in fact possible, this leads us to conclude that with certainty
\begin{equation}\label{eq:ghz-hv-cors}
    \begin{split}
        \la_A^1 \oplus \la_B^1 \oplus \la_C^1 &= 0; \\
        \la_A^1 \oplus \la_B^0 \oplus \la_C^0 &= 1; \\
        \la_A^0 \oplus \la_B^1 \oplus \la_C^0 &= 1; \\
        \la_A^0 \oplus \la_B^0 \oplus \la_C^1 &= 1; \quad \oplus \\
        \noalign{\smallskip} \hline \noalign{\smallskip}
        0 &= 1.
    \end{split}
\end{equation}
This contradiction demonstrates that no model with local deterministic hidden variables $\la_{A,B,C}^{0,1}$ can reproduce the correlations~\eqref{eq:ghz-cors} predicted by the Born rule.


    \section{Three GHZ-controlled quantum switches}
    \label{sec:switch-data}
    We will now see how this argument can be turned into a possibilistic proof of indefinite causal order in the quantum switch. Our first objective will be to identify a scenario involving quantum switches that yields possibilistic data satisfying properties analogous to~\eqref{eq:ghz-cors} above.

The quantum switch~\cite{CDPV13} is defined as a supermap~\cite{CDP08} taking two quantum operations $\E,\F$ on a \emph{target system} $T$ to an operation $\switch(\E,\F)$ on the joint system $CT$, which applies $\E$ and $\F$ to $T$ in an order that is coherently controlled by the state of the \emph{control qubit} $C$ (see Figure~\ref{fig:switch-supermap}). We will consider a variant of the quantum switch which is controlled in the $X$ basis $\{\ket+,\ket-\}$. Hence, if the target and control systems are described by Hilbert spaces $\H_T$ and $\H_C\cong\mathbb C^2$ and if $\E(\cdot)=E(\cdot)E^\dagger$ and $\F(\cdot)=F(\cdot)F^\dagger$ are pure operations with Kraus operators $E,F:\H_T\to\H_T$, then $\switch(\E,\F)(\cdot)=W(\cdot)W^\dagger$ where $W:\H_C\tns\H_T\to\H_C\tns\H_T$ is the operator defined by
\begin{equation}
    \label{eq:switch-defn}
    W \coloneqq \ket+\bra+_C \tns FE + \ket-\bra-_C \tns EF.
\end{equation}

\begin{figure}
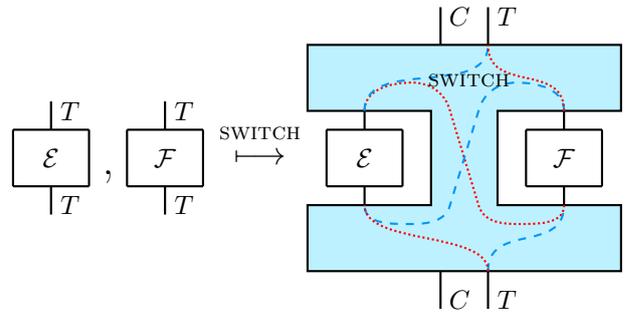

    \tikzfig{tikzit/switch-supermap}
    \caption{The quantum switch takes two quantum operations on the system $T$, here $\E$ and $\F$, to an operation on $CT$, where $C$ is a control qubit. The dotted (red) and dashed (blue) lines illustrate the wirings to which the quantum switch reduces upon preparation of $C$ in state $\op+$ and $\op-$, respectively.}
    \label{fig:switch-supermap}
\end{figure}

We consider a scenario (see Figure~\ref{fig:switchy-ghz}) wherein three such quantum switches, labelled by $A$, $B$ and $C$, are implemented at spacelike separation.%
\footnote{See Appendix~\ref{app:three-switches} for an explanation of why just one quantum switch does not suffice.}
Their control qubits are prepared in the GHZ state, while their target systems, also qubits, are all prepared in the state $\ket0$.
We will think of the two operations on the target system $T_A$ inside switch $A$ as being performed by two agents, $\pA_1$ and $\pA_2$, with classical input variables $x_1,x_2\in\{0,1\}$ and output variables $a,b\in\{0,1\}$, respectively.
If $x_i=0$ (for $i=1,2$), then $\pA_i$ performs no intervention (i.e.\ lets $T_A$ undergo the identity channel), and outputs $a_i=0$.
If $x_i=1$, she instead measures $T_A$ in the computational basis, records the result in $a_i\in\{0,1\}$, and prepares the outgoing target system in state $\ket{1}_{T_A}$.%
\footnote{Other choices of operations, like the ones used in Ref.~\cite{vdLBC23}, are also possible, and would lead to similar results.}
A third agent $\pA_3$ has no input variable and always measures the output control system $C_A$ in the $Y$ basis, recording their result in the output variable $a_3\in\{0,1\}$.
The output target system is discarded.
Similar operations are performed on the systems of switches $B$ and $C$.

The Born rule, together with the definition of the quantum switch~\eqref{eq:switch-defn} and a distribution over input variables, provides us with a probability distribution $Q(a_1a_2a_3b_1b_2b_3c_1c_2c_3x_1x_2y_1y_2z_1z_2)$ (see Appendix~\ref{app:switch-data} for a more precise definition). We will abbreviate this by $Q(\vec a \vec b\vec c\vec x \vec y \vec z)$ where $\vec a \coloneqq (a_1,a_2,a_3)$, $\vec x \coloneqq (x_1, x_2)$ and $\vec b$, $\vec c$, $\vec y$, and $\vec z$ are defined similarly.
As was the case for GHZ nonlocality, it will suffice to consider only the induced possibility distribution $q(\vec a \vec b\vec c \vec x \vec y \vec z) \coloneqq \pi(Q(\vec a \vec b\vec c\vec x \vec y \vec z))$.

\begin{figure*}
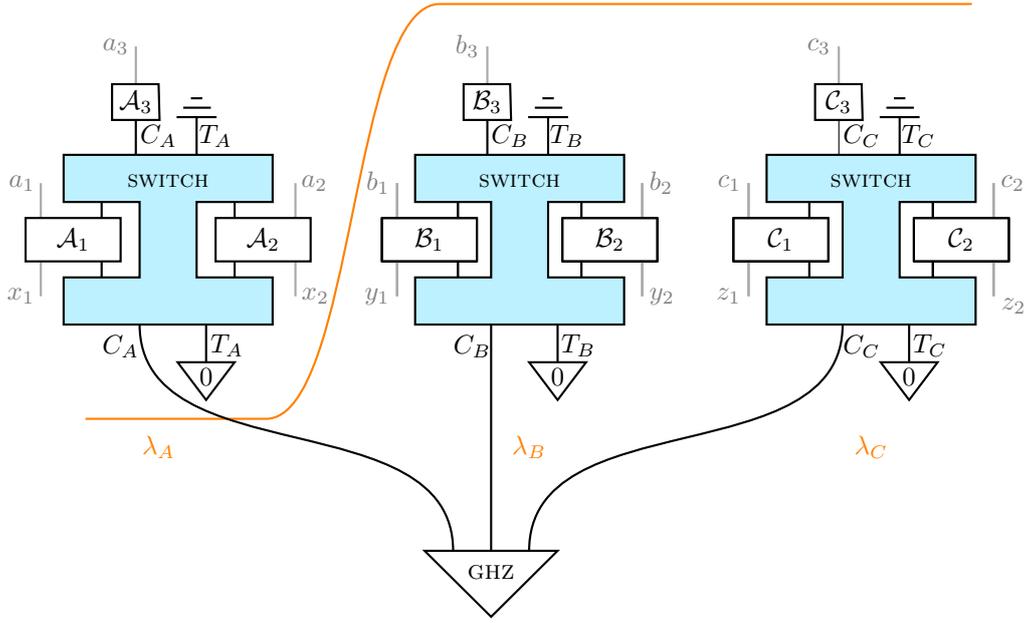

    \centering
    \tikzfig{tikzit/switchy-ghz2v2}
    \caption{A GHZ-inspired quantum switch setup.
    The diagram is read from bottom to top; black wires are quantum systems while grey wires are classical variables.
    Three quantum switches (in blue) have control qubits entangled in the GHZ state.
    Their target systems are qubits, initially prepared in the computational basis state $\ket0$.
    Three parties act on each switch; their actions, here represented by classical-quantum channels, are described in the main text.
    The output target systems are discarded.
    Overall, this diagram defines the classical channel (conditional probability distribution) $Q(\vec a \vec b \vec c\mid \vec x \vec y \vec z)$.
    The orange variables $\la_{A,B,C}$ denote the postulated hidden causal orders, which are assumed to take a value in the past lightcone of their respective switches. The orange line is an example of a spacelike hypersurface that can be used to argue for Equation~\eqref{eq:locality-2}.
    }
    \label{fig:switchy-ghz}
\end{figure*}

We now want to find patterns in these possibilities analogous to those in~\eqref{eq:ghz-cors}. For that purpose, the following two observations will be useful (cf.\ Figure~\ref{fig:rmk-1-and-2}). (These informal statements are given to provide intuition; formalised versions and their proofs can be found as Lemmas~\ref{lma:obs-1} and~\ref{lma:obs-2} in Appendix~\ref{app:switch-data}.)
\begin{fact}
    \label{fct:x-are-0}
    If $x_1=x_2=0$, then $a_3$ simulates the outcome of a $Y$ basis measurement of the \emph{input} control system $C_A$; similarly for switches $B$ and $C$.
\end{fact}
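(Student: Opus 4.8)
The plan is to exploit the algebraic structure of the switch operator~\eqref{eq:switch-defn} in the degenerate case where both internal interventions are trivial. Setting $x_1=x_2=0$ instructs both $\pA_1$ and $\pA_2$ to apply the identity channel to the target $T_A$ (and to output $a_1=a_2=0$), so in the notation of~\eqref{eq:switch-defn} the relevant Kraus operators are $E=F=\id_{T_A}$. The first step is simply to substitute these into the definition of $W$:
\begin{equation*}
    W = \ket+\bra+_{C_A}\tns\id_{T_A} + \ket-\bra-_{C_A}\tns\id_{T_A} = (\ket+\bra+ + \ket-\bra-)_{C_A}\tns\id_{T_A} = \id_{C_AT_A},
\end{equation*}
where the middle equality uses $\ket+\bra+ + \ket-\bra-=\id$. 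Thus $\switch(\id,\id)$ is nothing but the identity channel on $C_AT_A$: with no interventions, the coherently controlled order becomes irrelevant.

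Given this, the second step is to read off the measurement statistics. Because switch $A$ now acts as the identity, the joint state emerging on $C_AT_A$ equals the state fed in, so the reduced state of the output control qubit $C_A$ coincides with that of the input control qubit. The target register, prepared in $\ket0_{T_A}$ and eventually discarded, factors off and can be traced out without affecting the control. Hence $\pA_3$'s $Y$ basis measurement of the output control is, statistically, exactly a $Y$ basis measurement performed on the input control qubit---which is one of the three GHZ-entangled qubits. The same reasoning applies verbatim to switches $B$ and $C$.

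The computation is short; the only real care lies at the interface between the internal operations and the surrounding circuit, and this is where I expect the bookkeeping (rather than any conceptual difficulty) to concentrate. One must confirm that preparing $T_A$ in $\ket0$ and later discarding it introduces no back-action on $C_A$, and that the classical outputs are pinned to $a_1=a_2=0$. Since $W=\id$ keeps control and target in a product state (with the target remaining in $\ket0$), both points are immediate, but carrying them out carefully---including the marginalisation over $T_A$ and over the other switches' registers---is precisely the content of the formal statement, Lemma~\ref{lma:obs-1} in Appendix~\ref{app:switch-data}.
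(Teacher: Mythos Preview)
Your argument is correct and matches the paper's own proof: the formal version, Lemma~\ref{lma:obs-1}, is proved in a single line by noting that $\switch(\id_{T_A},\id_{T_A})=\id_{C_AT_A}$, which is exactly the computation you carry out. Your additional remarks about the target factoring off and $a_1=a_2=0$ being forced are the content of the $\delta_{a_1=a_2=0}$ factor in the lemma's statement.
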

Indeed, when $x_1=x_2=0$, no interventions are performed on the target system inside the switch, so that the switch itself reduces to the identity channel. This leaves the control system unaffected for $\pA_3$ to measure (see Figure~\ref{fig:rmk-1-and-2}a).

\begin{fact}
    \label{fct:x-are-1}
    If $x_1=x_2=1$, then $a_1$ simulates the outcome of an $X$ basis measurement of the input control system $C_A$; similarly for switches $B$ and $C$.
\end{fact}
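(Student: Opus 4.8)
The plan is to show that, with $x_1=x_2=1$ fixed, the Kraus operator of switch $A$ conditioned on a pair of outcomes $(a_1,a_2)$ collapses its input control qubit $C_A$ onto an $X$-eigenstate, with $a_1$ recording which one. First I would model the two interventions: on input $1$, each of $\pA_1,\pA_2$ measures $T_A$ in the computational basis and re-prepares the outgoing target in $\ket{1}$, so the Kraus operator realising outcome $a_i$ is $\ket{1}\bra{a_i}_{T_A}$. Writing $E=\ket{1}\bra{a_1}$ for $\pA_1$ and $F=\ket{1}\bra{a_2}$ for $\pA_2$ and using $\braket{a_j}{1}=\delta_{a_j,1}$, the two branches of~\eqref{eq:switch-defn} reduce to $FE=\delta_{a_2,1}\ket{1}\bra{a_1}$ and $EF=\delta_{a_1,1}\ket{1}\bra{a_2}$, so that the outcome-conditioned switch operator becomes
\[
  W_{a_1 a_2} = \delta_{a_2,1}\,\ket{+}\bra{+}_{C_A}\tns\ket{1}\bra{a_1}_{T_A} + \delta_{a_1,1}\,\ket{-}\bra{-}_{C_A}\tns\ket{1}\bra{a_2}_{T_A}.
\]

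Next I would feed in the fixed initial target $\ket{0}_{T_A}$ together with an arbitrary control state $\ket{\psi}_{C_A}$ and simplify using $\braket{a_i}{0}=\delta_{a_i,0}$. Only two outcome pairs have nonzero amplitude: $(a_1,a_2)=(0,1)$, for which the control is left in $\ket{+}_{C_A}$ with amplitude $\braket{+}{\psi}$, and $(a_1,a_2)=(1,0)$, for which it is left in $\ket{-}_{C_A}$ with amplitude $\braket{-}{\psi}$ (the pairs $(0,0)$ and $(1,1)$ vanish). After discarding the target, which is now deterministically $\ket{1}$, the map $\ket{\psi}\mapsto(a_1,\text{post-state})$ is therefore exactly a projective $X$-basis measurement of $C_A$, under the identification $a_1=0\leftrightarrow\ket{+}$ and $a_1=1\leftrightarrow\ket{-}$.

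To lift this to the full tripartite setup I would appeal to the tensor-product structure of the three switches: they act on disjoint control and target registers, so by linearity the reduction above applies to the $C_A$ factor of the joint GHZ control state regardless of the inputs and outcomes at $B$ and $C$, and the argument is symmetric under relabelling for switches $B$ and $C$. This yields the claim that $a_1$ faithfully simulates an $X$ measurement of the input control qubit, while leaving $C_A$ in the matching eigenstate for $\pA_3$'s subsequent $Y$ measurement.

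I expect the main difficulty to be bookkeeping rather than anything conceptual. The points requiring care are modelling the measure-and-re-prepare operation correctly as $\ket{1}\bra{a_i}$ (rather than a bare projector $\ket{a_i}\bra{a_i}$), tracking the \emph{output} control state and not merely the outcome statistics---since that post-measurement state is what the later observations and the $Y$ measurement rely on---and confirming that the $(0,0)$ and $(1,1)$ branches genuinely carry zero amplitude, so that $a_2=a_1\oplus1$ is forced.
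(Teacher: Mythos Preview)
Your proposal is correct and follows essentially the same route as the paper's formalisation in Lemma~\ref{lma:obs-2}: both compute the outcome-conditioned Kraus operator of the switch using $E=\ket1\bra{a_1}$, $F=\ket1\bra{a_2}$, apply it to $\ket\psi_{C_A}\ket0_{T_A}$, and observe that the surviving branch projects $C_A$ onto $\ket+$ or $\ket-$ according to $a_1$. The only cosmetic difference is that you track the post-measurement control state explicitly (matching Figure~\ref{fig:rmk-1-and-2}b), whereas the paper's formal lemma instead sums over $a_2,a_3$ to state the result purely as an $a_1$-conditioned probability; these are equivalent.
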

This is because the input target system is initially prepared in state $\ket0_{T_A}$, while $\pA_2$ (given $x_2=1$) reprepares it in the orthogonal state $\ket1_{T_A}$. As a result, a measurement of $a_1=1$ is only compatible with the wiring in which $\pA_1$ comes after $\pA_2$---and therefore only with finding the control qubit in state $\ket-_{C_A}$ upon an $X$ basis measurement. Similarly, $a_1=0$ is only compatible with the control qubit being found in state $\ket+_{C_A}$. Therefore the probabilities (and hence possibilities) of these events are identical.
This is depicted diagrammatically in Figure~\ref{fig:rmk-1-and-2}b and proven in Appendix~\ref{app:switch-data}.


\begin{figure*}
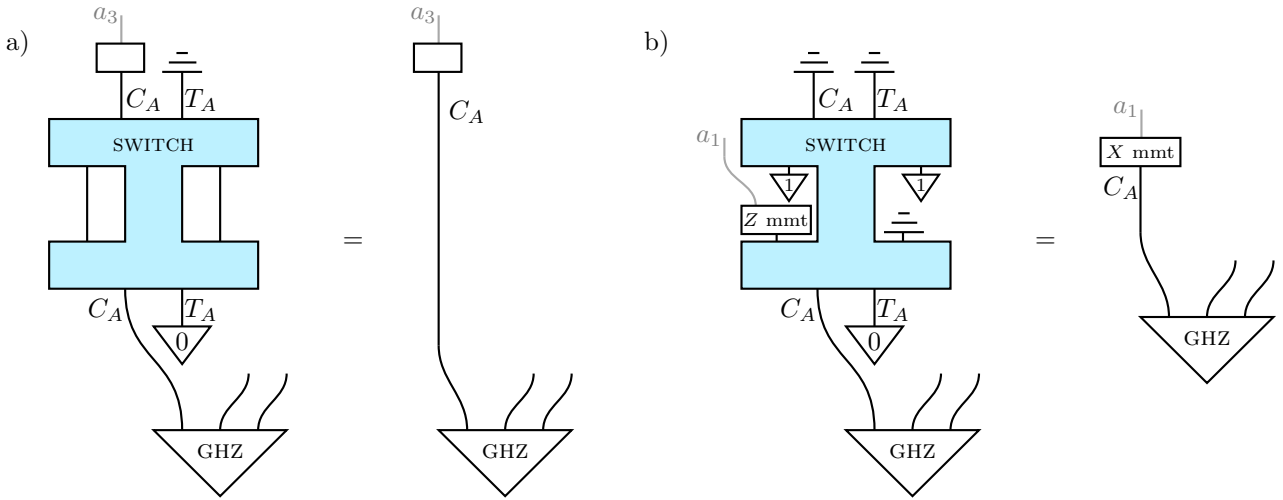

    \centering
    \scalebox{1}{a) \tikzfig{tikzit/remark2v2} \qquad b) \tikzfig{tikzit/remark1v3}}
    \caption{a) If $x_1=x_2=0$, then $\pA_1$ and $\pA_2$ perform identity operations, so that the quantum switch itself also reduces to the identity channel. b) If $x_1=x_2=1$, they both measure the target system in the computational ($Z$) basis and afterwards prepare the state $\ket1_{T_A}$. The $a_1$ outcome then has the same probability and postselected state as an $X$ basis measurement of the input control qubit. (Note that the switch is controlled in the $X$ basis.)}
    \label{fig:rmk-1-and-2}
\end{figure*}

Observations~\ref{fct:x-are-0} and~\ref{fct:x-are-1}, together with the fact that the control qubits are entangled in the GHZ state, tell us that if we make appropriate changes to the variables featuring in~\eqref{eq:ghz-cors} then the resulting implications will hold in our new possibility distribution~$q$:
\begin{equation}
    \label{eq:switchy-ghz-cors}
    \begin{split}
        \vec x=\mathbf1,\vec y=\mathbf1,\vec z=\mathbf1 \impl{q} a_1\oplus b_1 \oplus c_1 = 0; \\
        \vec x=\mathbf1,\vec y=\mathbf0,\vec z=\mathbf0 \impl{q} a_1\oplus b_3 \oplus c_3 = 1; \\
        \vec x=\mathbf0,\vec y=\mathbf1,\vec z=\mathbf0 \impl{q} a_3\oplus b_1 \oplus c_3 = 1; \\
        \vec x=\mathbf0,\vec y=\mathbf0,\vec z=\mathbf1 \impl{q} a_3\oplus b_3 \oplus c_1 = 1
    \end{split}
\end{equation}
(here $\mathbf0\coloneqq (0,0)$ and $\mathbf1\coloneqq (1,1)$).
(It is worth noting that these implications are nothing more than properties of the possibility distribution $q$ and can therefore also be derived directly from the Born rule and definition of the quantum switch. Observations~\ref{fct:x-are-0} and~\ref{fct:x-are-1} are only given to provide intuition for why~\eqref{eq:switchy-ghz-cors} should hold, given that we already know the GHZ correlations of~\eqref{eq:ghz-cors}.)

The implications in~\eqref{eq:switchy-ghz-cors}, like those in~\eqref{eq:ghz-cors}, do not immediately lead to a contradiction, because each equation on the right refers to outputs for a different set of inputs.
As in the case of GHZ nonlocality, a contradiction only arises when some of the outcome variables can be replaced by (hidden) variables that are independent of these measurement choices. The next sections define the assumptions of \emph{definite causal order} and \emph{relativistic causality} and show that they imply the existence of such variables---and, therefore, that the data $q(\vec a \vec b\vec c \vec x \vec y \vec z)$ observed in the quantum switch scenario just described are incompatible with the conjunction of those assumptions. This is formally stated in Theorem~\ref{thm:possibilistic}.

    \section{Assumption 1: Definite causal order}
    \label{sec:definite-causal-order}
    In the following we will want to probe causal relations on the basis of (possibilistic) correlations. An important assumption enabling this is that the input variables $x_1,x_2,y_1,y_2,z_1,z_2$ of the agents' interventions are freely chosen. In particular, this means that they are uncorrelated with variables outside their causal future, motivating some of the mathematical conditions defined below. It also motivates the assumption that all inputs are jointly possible, which will be needed later:
\begin{equation}
    \label{eq:all-inputs-possible}
    \forall \vec x,\vec y,\vec z\  q(\vec x\vec y\vec z) = 1.
\end{equation}

\begin{definition}
    A \defn{hidden causal order model} for a possibility distribution $r(a_1a_2x_1x_2)$ is a possibility distribution $r(a_1a_2x_1x_2\la)$, where $\la\in\{0,1\}$, such that
    \begin{subequations}
        \begin{align}
        {\sum}_\la r(a_1 a_2 x_1 x_2 \la) &= r(a_1 a_2 x_1 x_2), \\
            \la &\indep_r x_1x_2, \label{eq:hco-def-fc} \\
            a_1 &\indep_r x_2 \mid \la=0, \label{eq:hco-def-a1x2} \\
            \text{and\quad} a_2 &\indep_r x_1 \mid \la=1. \label{eq:hco-def-a2x1}
        \end{align}
    \end{subequations}
\end{definition}

The interpretation of this hidden variable $\la$ is that it determines the causal order between interventions performed by parties $\pA_1$ and $\pA_2$ with inputs $x_1,x_2$ and outputs $a_1,a_2$, respectively: the value $\la=0$ indicates that $\pA_1$ acts outside the causal future of $\pA_2$, so that the possibility of outcome $a_1$ is independent of the freely chosen input $x_2$.
The existence of a hidden causal order model is the possibilistic analogue of being a \defn{causal} probability distribution~\cite{Bran+15} (see also Appendix~\ref{app:independence-notation}).

\begin{assumption}[Definite causal order]
    \label{ass:dco}
    There exist three variables $\la_A,\la_B,\la_C\in\{0,1\}$ and a joint possibility distribution $q(\vec a\vec b\vec c\vec x\vec y\vec z\la_A\la_B\la_C)$ satisfying
    \begin{equation}
        \sum_{\la_A,\la_B,\la_C}q(\vec a\vec b\vec c\vec x\vec y\vec z\la_A\la_B\la_C) = q(\vec a\vec b\vec c\vec x\vec y\vec z),
    \end{equation}
    such that $q(a_1a_2x_1x_2\la_A)$ is a hidden causal order model for $q(a_1a_2x_1x_2)$, $q(b_1b_2y_1y_2\la_B)$ for $q(b_1b_2y_1y_2)$, and $q(c_1c_2z_1z_2\la_C)$ for $q(c_1c_2z_1z_2)$.%
    \footnote{Note that the assumptions discussed here are mathematically differently organised when compared to Ref.~\cite{vdLBC23}. The appropriate analogue of Assumption~\ref{ass:dco} here follows from the conjunction of the assumptions called Definite Causal Order and Free Interventions in~\cite{vdLBC23}. Likewise, the analogue of Assumption~\ref{ass:loc} below follows from Relativistic Causality and Free Interventions in~\cite{vdLBC23}.}
\end{assumption}

In the GHZ proof, the hidden variables were (when assumed to exist) directly revealed by outcomes of measurements, allowing us to replace the output variables $a,b,c$ in~\eqref{eq:ghz-cors} by hidden variables to obtain~\eqref{eq:ghz-hv-cors}.
This is not necessarily the case for hidden variables determining causal orders.
However, as we already noted after Observation~\ref{fct:x-are-1}, our choice of initial target state and operations performed inside the switch are such that, for $x_1=x_2=1$, each value of $a_1$ is compatible with only one of the orderings of the operations on the target system.
The following Lemma elevates this observation to the device-independent level.
\begin{lemma}
    \label{lma}
    Suppose that the possibility distribution $r(a_1a_2x_1x_2)$, where $a_1,a_2,x_1,x_2\in\{0,1\}$, satisfies
    \begin{subequations}
        \label{eq:lma-data}
        \begin{align}
            r(a_1=1,x_2=0) &= 0, \label{eq:lma-data-1} \\
            r(a_2=1,x_1=0) &= 0, \label{eq:lma-data-2} \\
            \text{and\quad} r(a_1=a_2=0,x_1=x_2=1) &= 0, \label{eq:lma-data-3}
        \end{align}
    \end{subequations}
    then any hidden causal order model $r(a_1a_2x_1x_2\la)$ for $r(a_1a_2x_1x_2)$ satisfies
    \begin{equation}
        x_1=x_2=1 \Longrightarrow_r a_1=\la.
    \end{equation}
\end{lemma}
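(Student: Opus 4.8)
The plan is to unpack the target implication $x_1=x_2=1 \implshort{r} a_1=\la$ into its two possibilistic content statements, namely $r(x_1=x_2=1,a_1=1,\la=0)=0$ and $r(x_1=x_2=1,a_1=0,\la=1)=0$, and to establish each by first isolating the two ``half'' facts
\begin{equation*}
    \text{(K1)}\quad r(a_1=1,\la=0)=0, \qquad \text{(K2)}\quad r(a_2=1,\la=1)=0,
\end{equation*}
i.e.\ $\la=0\implshort{r}a_1=0$ and $\la=1\implshort{r}a_2=0$. Throughout I would lean on two elementary features of the Boolean semiring: that a marginal dominates any of its summands (so $r(E)=0$ forces $r(E,F)=0$, and $r(a_1=1,\la=0)\le r(\la=0)$), and that consequently a product $uv$ with $u\le v$ simply equals $u$.

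To obtain (K1), I would instantiate the conditional-independence condition~\eqref{eq:hco-def-a1x2} at $a_1=1$, $x_2=0$, giving $r(a_1=1,x_2=0,\la=0)=r(a_1=1,\la=0)\,r(x_2=0,\la=0)$. The left-hand side vanishes by~\eqref{eq:lma-data-1}. On the right, free choice~\eqref{eq:hco-def-fc}, marginalised over $x_1$, yields $r(x_2=0,\la=0)=r(x_2=0)\,r(\la=0)$, and the standing assumption that the inputs are freely chosen and hence all possible (Eq.~\eqref{eq:all-inputs-possible}, giving $r(x_2=0)=1$) reduces this factor to $r(\la=0)$. Thus $r(a_1=1,\la=0)\,r(\la=0)=0$; since $r(a_1=1,\la=0)\le r(\la=0)$ the product is just $r(a_1=1,\la=0)$, and (K1) follows. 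Fact (K2) comes out symmetrically from~\eqref{eq:hco-def-a2x1} and~\eqref{eq:lma-data-2}.

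With (K1) and (K2) available the conclusion is immediate. The first target statement satisfies $r(x_1=x_2=1,a_1=1,\la=0)\le r(a_1=1,\la=0)=0$ by (K1). For the second I would split on $a_2$: the $a_2=0$ part is bounded by $r(a_1=0,a_2=0,x_1=x_2=1)=0$ from~\eqref{eq:lma-data-3}, while the $a_2=1$ part is bounded by $r(a_2=1,\la=1)=0$ from (K2); hence $r(x_1=x_2=1,a_1=0,\la=1)=0$ as well, and the two cases together give $x_1=x_2=1\implshort{r}a_1=\la$.

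The step I expect to be delicate is precisely the cancellation in (K1)/(K2): from $r(a_1=1,\la=0)\,r(x_2=0,\la=0)=0$ one may drop the second factor only once $x_2=0$ is known to be possible jointly with $\la=0$. This is exactly where free choice and the possibility of all inputs enter, and the assumption is genuinely needed rather than cosmetic—if, say, $x_2=0$ were impossible then~\eqref{eq:lma-data-1} would be vacuous and the model concentrated on $x_1=x_2=1,\,a_1=a_2=1,\,\la=0$ would satisfy every hypothesis yet violate the conclusion. So the real content is less the Boolean algebra than the recognition that the freely-chosen inputs must be assumed possible in order for the perfect correlations~\eqref{eq:lma-data} to pin the hidden order $\la$ to the outcome $a_1$.
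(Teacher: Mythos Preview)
Your proof is correct and follows essentially the same route as the paper's: you establish the two key facts $r(a_1=1,\la=0)=0$ and $r(a_2=1,\la=1)=0$ via the same chain (conditional independence, free choice, possibility of inputs, and the absorption $u\le v\Rightarrow uv=u$), and then combine them with~\eqref{eq:lma-data-3} to obtain the conclusion. Your explicit discussion of why $r(x_2=0)=1$ is genuinely needed is a nice clarification of a step the paper leaves implicit.
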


\begin{proof}
    Our first claim is that
    \begin{equation}
        \label{eq:lma-prf-pt1}
        a_1=1\Longrightarrow_r \la=1.
    \end{equation}
    Indeed, from Equation~\eqref{eq:lma-data-1},
    \begin{equation}
        \begin{split}
            0 &= r(a_1=1, x_2=0) \\
            &\geq r(a_1=1,x_2=0,\la=0) \\
            &\stackrel{\eqref{eq:hco-def-a1x2}}{=} r(a_1=1,\la=0)r(x_2=0,\la=0) \\
            &\stackrel{\eqref{eq:hco-def-fc}}{=} r(a_1=1,\la=0)r(x_2=0)r(\la=0) \\
            &\stackrel{\eqref{eq:all-inputs-possible}}{=} r(a_1=1,\la=0)r(\la=0) \\
            &= r(a_1=1,\la=0). \\
        \end{split}
    \end{equation}
    Moreover, we have
    \begin{equation}
        \label{eq:lma-prf-pt2}
        x_1=x_2=1,a_1=0 \ \Longrightarrow_r\ a_2=1 \ \Longrightarrow_r\ \la=0;
    \end{equation}
    here the first implication is Equation~\eqref{eq:lma-data-3}, while the second follows from an argument analogous to that for~\eqref{eq:lma-prf-pt1} but using~\eqref{eq:lma-data-2}.~\eqref{eq:lma-prf-pt1} and~\eqref{eq:lma-prf-pt2} together imply the result.
\end{proof}

The marginals $q(a_1 a_2 x_1 x_2), q(b_1 b_2 y_1 y_2)$ and $q(c_1 c_2 z_1 z_2)$ for our three switches satisfy the conditions of Lemma~\ref{lma}. Therefore we can, under Assumption~\ref{ass:dco}, replace some of the observed variables in~\eqref{eq:switchy-ghz-cors} by hidden variables:
\begin{equation}
    \label{eq:switchy-ghz-hv-cors}
    \begin{split}
        \vec x=\mathbf1,\vec y=\mathbf1,\vec z=\mathbf1 & \impl{q} \la_A\oplus \la_B \oplus \la_C = 0; \\
        \vec x=\mathbf1,\vec y=\mathbf0,\vec z=\mathbf0 & \impl{q} \la_A\oplus b_3 \oplus c_3 = 1; \\
        \vec x=\mathbf0,\vec y=\mathbf1,\vec z=\mathbf0 & \impl{q} a_3\oplus \la_B \oplus c_3 = 1; \\
        \vec x=\mathbf0,\vec y=\mathbf0,\vec z=\mathbf1 & \impl{q} a_3\oplus b_3 \oplus \la_C = 1.
    \end{split}
\end{equation}

To derive a contradiction from this set of implications, we need our final assumption.

    \section{Assumption 2: Relativistic causality}
    \label{sec:locality}
    Recall that the three switches are implemented at spacelike separation. Moreover, the hidden variables $\la_{A,B,C}$ determine the causal order between the parties in their respective switch and must therefore take values in the causal past of those parties (see Figure~\ref{fig:switchy-ghz}).
With this in mind, the following conditions are motivated by the principle of relativistic causality, which requires the causal order on the variables to be compatible with the provided lightcone structure---meaning in particular that freely chosen inputs cannot influence the possibility of events outside their future lightcone.
\begin{assumption}[Relativistic causality]
    \label{ass:loc}
    The possibility distribution $q(\vec a\vec b\vec c\vec x\vec y\vec z\la_A\la_B\la_C)$ of Assumption~\ref{ass:dco} satisfies
    \begin{subequations}
        \label{eq:locality}
        \begin{align}
            \la_A\la_B\la_C &\indep_q \vec x\vec y\vec z; \label{eq:locality-1}\\
            \la_A b_3 c_3 &\indep_q \vec x \mid \vec y\vec z; \label{eq:locality-2}\\
            a_3 \la_B c_3 &\indep_q \vec y \mid \vec x\vec z; \label{eq:locality-3}\\
            a_3 b_3 \la_C &\indep_q \vec z \mid \vec x\vec y. \label{eq:locality-4}
        \end{align}
    \end{subequations}
\end{assumption}
Equation~\eqref{eq:locality-2}, for instance, is motivated by the existence of the spacelike hypersurface depicted in orange in Figure~\ref{fig:switchy-ghz}.
Note that this Equation implies conditional independences like $b_3 c_3 \indep_q \vec x\mid \la_A \vec y \vec z$ and $\la_A \indep_q \vec x \mid \vec y \vec z$, which are akin to the probabilistic notions that are often referred to as \emph{parameter independence} and \emph{measurement independence}, respectively (see also Appendix~\ref{app:independence-notation}).
Here, we consider both these notions to be motivated by the principle of relativistic causality, having assumed free choice.
The four conditions~\eqref{eq:locality-1}--\eqref{eq:locality-4} can be directly applied to~\eqref{eq:switchy-ghz-hv-cors} to yield the four equations%
\begin{equation}
    \label{eq:switchy-ghz-hv-loc-cors}
    \begin{alignedat}{2}
        \vec x=\mathbf0,\vec y=\mathbf0,\vec z=\mathbf0 \impl{q} && \la_A\oplus \la_B \oplus \la_C &= 0 \\
        \vec x=\mathbf0,\vec y=\mathbf0,\vec z=\mathbf0 \impl{q} && \la_A\oplus b_3 \oplus c_3 &= 1 \\
        \vec x=\mathbf0,\vec y=\mathbf0,\vec z=\mathbf0 \impl{q} && a_3\oplus \la_B \oplus c_3 &= 1 \\
        \vec x=\mathbf0,\vec y=\mathbf0,\vec z=\mathbf0 \impl{q} && a_3\oplus b_3 \oplus \la_C &= 1 \quad \oplus \\
        \noalign{\smallskip} \hline \noalign{\smallskip}
        \vec x=\mathbf0,\vec y=\mathbf0,\vec z=\mathbf0 \impl{q} && 0 &=1.
    \end{alignedat}
\end{equation}
Together with the fact that the combination $\vec x=\vec y=\vec z=\mathbf0$ is possible (by Eq.~\eqref{eq:all-inputs-possible}), this yields our desired contradiction.
Summarising, we have proven the following.
\begin{theorem}\label{thm:possibilistic}
    No possibility distribution $q(\vec a \vec b \vec c \vec x \vec y \vec z)$ that satisfies Equations~\eqref{eq:switchy-ghz-cors},~\eqref{eq:all-inputs-possible},~\eqref{eq:lma-data} and similar conditions for $B$ and $C$ can also satisfy the conjunction of Assumptions~\ref{ass:dco} (definite causal order) and~\ref{ass:loc} (relativistic causality).
    In particular, the distribution $q(\vec a \vec b \vec c \vec x \vec y \vec z)$ arising in the quantum switch scenario depicted in Figure~\ref{fig:switchy-ghz} is incompatible with these assumptions.
\end{theorem}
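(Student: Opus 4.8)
The plan is to argue by contradiction: suppose a possibility distribution $q$ satisfies~\eqref{eq:switchy-ghz-cors},~\eqref{eq:all-inputs-possible},~\eqref{eq:lma-data} (and the analogues for $B,C$) together with Assumptions~\ref{ass:dco} and~\ref{ass:loc}, and then chain the pieces assembled above to force the impossibility $0=1$ on a jointly possible input.

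First I would use Lemma~\ref{lma} to lift the device-level correlations~\eqref{eq:switchy-ghz-cors} to the hidden-variable level. The marginals $q(a_1a_2x_1x_2)$, $q(b_1b_2y_1y_2)$ and $q(c_1c_2z_1z_2)$ satisfy the data conditions~\eqref{eq:lma-data}, and Assumption~\ref{ass:dco} supplies a hidden causal order model for each, so the lemma yields $x_1=x_2=1 \implshort{q} a_1=\la_A$ and its analogues for $B,C$. Substituting the first outcome of every wing that carries input $\mathbf{1}$ by its causal-order variable then turns~\eqref{eq:switchy-ghz-cors} into~\eqref{eq:switchy-ghz-hv-cors}.

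Next I would deploy Assumption~\ref{ass:loc} to slide all four antecedents onto the single input $\vec x=\vec y=\vec z=\mathbf{0}$. Each consequent of~\eqref{eq:switchy-ghz-hv-cors} mentions only variables whose joint possibility is, by the matching line of~\eqref{eq:locality}, insensitive to precisely the inputs I need to change: for example $\la_A b_3 c_3 \indep_q \vec x \mid \vec y\vec z$ from~\eqref{eq:locality-2} flips $\vec x$ from $\mathbf{1}$ to $\mathbf{0}$ in the second line while $\vec y=\vec z=\mathbf{0}$, and~\eqref{eq:locality-1},~\eqref{eq:locality-3},~\eqref{eq:locality-4} handle the other three—each flip also invoking~\eqref{eq:all-inputs-possible} to discharge the marginal factor $q(\vec x\vec y\vec z)=1$. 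This produces~\eqref{eq:switchy-ghz-hv-loc-cors}, four implications with the common antecedent $\vec x=\vec y=\vec z=\mathbf{0}$.

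Finally I would combine the four perfect correlations. Each of $\la_A,\la_B,\la_C,a_3,b_3,c_3$ occurs in exactly two of the four consequents, so their $\oplus$-sum cancels on the left while the right-hand sides sum to $0\oplus1\oplus1\oplus1=1$; hence the conjunction of the four equations is satisfied by no assignment. I expect the step needing the most care to be the possibilistic combination itself: since addition in $\mathbb{B}$ is disjunction, the possibility of a union is the union of the possibilities, so the four implications $\vec x=\vec y=\vec z=\mathbf{0} \implshort{q} Q_i$ amalgamate into $\vec x=\vec y=\vec z=\mathbf{0} \implshort{q} \bigwedge_i Q_i$, and because $\bigwedge_i Q_i$ holds for no assignment this reads $\vec x=\vec y=\vec z=\mathbf{0} \implshort{q} 0=1$, i.e.\ $q(\vec x=\vec y=\vec z=\mathbf{0})=0$, contradicting~\eqref{eq:all-inputs-possible}. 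It is exactly this Boolean-semiring merging of several implications sharing one possible input—rather than summing signed expectation values—that substitutes for the probabilistic step of the GHZ argument; the ``in particular'' clause then follows because the preceding sections verify that the genuine quantum switch distribution meets every hypothesis.
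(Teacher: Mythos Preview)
Your proposal is correct and mirrors the paper's own argument: apply Lemma~\ref{lma} under Assumption~\ref{ass:dco} to pass from~\eqref{eq:switchy-ghz-cors} to~\eqref{eq:switchy-ghz-hv-cors}, use the four independence clauses of Assumption~\ref{ass:loc} (together with~\eqref{eq:all-inputs-possible}) to move every antecedent to $\vec x=\vec y=\vec z=\mathbf 0$, and then XOR the four consequents to obtain the contradiction with~\eqref{eq:all-inputs-possible}. Your explicit justification of the Boolean-semiring merging of the four implications is a welcome elaboration of a step the paper leaves implicit.
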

\hfill

    \section{Back to probabilities: a causal Mermin inequality}
    Although quantum theory makes predictions about the (im)possibility of events in ideal measurement scenarios, experimental data are inevitably subject to uncertainties and are thus bound to be probabilistic.
Impossibilities like those of Eqs.~\eqref{eq:switchy-ghz-cors} and~\eqref{eq:lma-data} can therefore not be verified with certainty, leaving the possibilistic argument of the preceding sections unamenable to experiment.
However, when the probabilistic data approximately satisfy the required impossibilities, a refutation of (probabilistic versions of) the assumptions is still possible via the violation of the following inequality, closely analogous to Mermin's inequality~\cite{Mer90a} for Bell nonlocality.
The derivation of this inequality relies on the probabilistic conditions~\eqref{eq:probabilistic-conditions} given in Appendix~\ref{app:mermin-ineq}, which are analogous to (but stronger than) the possibilistic conditions expressed in Equation~\eqref{eq:all-inputs-possible} and Assumptions~\ref{ass:dco} and~\ref{ass:loc}, and are likewise physically motivated by the principles of free interventions, definite causal order, and relativistic causality.


\begin{theorem}\label{thm:causal-mermin}
    Any probability distribution $P(\vec a \vec b \vec c \vec x\vec y \vec z \la_A \la_B \la_C)$ that fulfills the conditions in~\eqref{eq:probabilistic-conditions} satisfies the inequality
    \begin{widetext}
        \begin{alignat}{5}  
            & P(a_1 \oplus b_1\oplus c_1 = 0 \mid \vec x=\mathbf1,\vec y=\mathbf1,\vec z = \mathbf1)
            + P(a_1 \oplus b_3\oplus c_3 = 1 \mid \vec x=\mathbf1,\vec y=\mathbf0,\vec z = \mathbf0) \span \span \span \span \span \nonumber \\
            +{} & P(a_3 \oplus b_1\oplus c_3 = 1 \mid \vec x=\mathbf0,\vec y=\mathbf1,\vec z = \mathbf0)
            + P(a_3 \oplus b_3\oplus c_1 = 1 \mid \vec x=\mathbf0,\vec y=\mathbf0,\vec z = \mathbf1)  \span \span \span \span \span \nonumber \\
            -{} 2\big[\ &P(a_1=1 \mid x_1x_2=10) &&+{} P(a_2=1 \mid x_1 x_2 = 01) &&+{} P(a_1 a_2=00 \mid x_1x_2=11) &&& \label{eq:causal-mermin-ineq} \\
            +{} &P(b_1=1 \mid y_1y_2=10) &&+{} P(b_2=1 \mid y_1 y_2 = 01) &&+{} P(b_1 b_2=00 \mid y_1y_2=11) \nonumber \\
            +{} &P(c_1=1 \mid z_1z_2=10) &&+{} P(c_2=1 \mid z_1 z_2 = 01) &&+{} P(c_1 c_2=00 \mid z_1z_2=11)  \big] &&\leq 3. \nonumber
        \end{alignat}
    \end{widetext}
    \begin{proof}
        See Appendix~\ref{app:mermin-ineq}.
    \end{proof}
\end{theorem}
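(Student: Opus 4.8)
The plan is to lift the possibilistic argument of Sections~\ref{sec:definite-causal-order}--\ref{sec:locality} to a quantitative statement, reading the nine probabilities in the bracket of~\eqref{eq:causal-mermin-ineq} as an ``error budget'' that measures how badly the exact impossibilities of Lemma~\ref{lma} fail. Writing $\delta_A \coloneqq P(a_1{=}1\mid x_1x_2{=}10) + P(a_2{=}1\mid x_1x_2{=}01) + P(a_1a_2{=}00\mid x_1x_2{=}11)$ and defining $\delta_B,\delta_C$ analogously, the bracket is exactly $\delta_A+\delta_B+\delta_C$, so if we denote the four correlation probabilities on the first two lines by $S_1,\dots,S_4$, the target reads $S_1+S_2+S_3+S_4 \leq 3 + 2(\delta_A+\delta_B+\delta_C)$.

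First I would prove the probabilistic counterpart of Lemma~\ref{lma}: under the definite-causal-order conditions in~\eqref{eq:probabilistic-conditions}, $P(a_1\neq\la_A \mid \vec x = \mathbf1) \leq \delta_A$, and similarly for $B$ and $C$. Following the two steps of the proof of Lemma~\ref{lma}, I would bound $P(a_1{=}1,\la_A{=}0\mid\vec x=\mathbf1)$ by transporting the conditioning from $x_2{=}1$ to $x_2{=}0$ via the probabilistic analogues of~\eqref{eq:hco-def-a1x2} and~\eqref{eq:hco-def-fc}, yielding $P(a_1{=}1,\la_A{=}0\mid\vec x=\mathbf1) \leq P(a_1{=}1\mid x_1x_2{=}10)$; and I would bound $P(a_1{=}0,\la_A{=}1\mid\vec x=\mathbf1)$ by splitting on the value of $a_2$, using the probabilistic version of~\eqref{eq:lma-data-3} for the $a_2{=}0$ branch and the $x_1$-transport analogous to~\eqref{eq:hco-def-a2x1} for the $a_2{=}1$ branch. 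Adding the two bounds gives $P(a_1\neq\la_A\mid\vec x=\mathbf1)\leq\delta_A$.

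With this in hand I would assemble the inequality. Since each $S_i$ is the probability of an XOR event, replacing one observed outcome by its hidden variable changes $S_i$ by at most the probability that the two disagree. Hence, by the lemma, $S_1 \leq P(\la_A\oplus\la_B\oplus\la_C{=}0\mid\vec x=\vec y=\vec z=\mathbf1) + \delta_A+\delta_B+\delta_C$ (replacing all of $a_1,b_1,c_1$), while $S_2 \leq P(\la_A\oplus b_3\oplus c_3{=}1\mid\vec x=\mathbf1,\vec y=\vec z=\mathbf0)+\delta_A$ and likewise $S_3,S_4$ incur the single errors $\delta_B,\delta_C$ (replacing only $b_1$, $c_1$). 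Next I would invoke the relativistic-causality conditions~\eqref{eq:locality-1}--\eqref{eq:locality-4} to transport each of these hidden-variable probabilities to the common setting $\vec x=\vec y=\vec z=\mathbf0$; the same conditions, guaranteeing each switch's variables are independent of the remote inputs, are also what let the per-switch bound $\delta_A$ control the replacement error at the fixed far inputs appearing in $S_1,\dots,S_4$. Each $\la$ occurs in exactly two of the four terms, so the accumulated error is precisely $2(\delta_A+\delta_B+\delta_C)$, matching the coefficient in~\eqref{eq:causal-mermin-ineq}. Finally, the four transported events now live on a single joint distribution over $\la_A,\la_B,\la_C,a_3,b_3,c_3$, and the Mermin/GHZ parity argument of~\eqref{eq:switchy-ghz-hv-loc-cors} applies verbatim: the XOR of the four left-hand sides vanishes identically (each of the six variables appears twice) while the XOR of the right-hand sides equals $1$, so the four events cannot all hold at once. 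Thus at most three hold for every assignment, so the four transported probabilities sum to at most $3$ pointwise, hence in expectation.

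The main obstacle I anticipate is the quantitative lemma together with the bookkeeping that surrounds it. Where Lemma~\ref{lma} merely chained exact implications, its probabilistic form must track how the two error contributions from the $a_1{=}0$ branch combine and must apply the conditional-independence transports in the right order so that nothing is double-counted; one must also check carefully that conditioning on the remote switches' inputs does not inflate the per-switch error, which is where the relativistic-causality conditions re-enter. Getting this accounting right is what pins the coefficient to exactly $2$ and, since the quantum correlations give $S_1=\dots=S_4=1$ with $\delta_A=\delta_B=\delta_C=0$, shows the bound $3$ is violated to its algebraic maximum of $4$.
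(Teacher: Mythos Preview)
Your proposal is correct and follows essentially the same route as the paper's proof in Appendix~\ref{app:mermin-ineq}: define the error budgets $\delta_A,\delta_B,\delta_C$ (the paper calls them $\alpha,\beta,\gamma$), prove the probabilistic analogue of Lemma~\ref{lma} giving $P(a_1\neq\la_A\mid\vec x=\mathbf1)\leq\delta_A$, replace the observed $a_1,b_1,c_1$ by $\la_A,\la_B,\la_C$ at the cost of these errors, transport all four terms to the common setting $\vec x=\vec y=\vec z=\mathbf0$ via the relativistic-causality conditions, and bound the resulting Mermin-type sum by~$3$ using the parity contradiction. The paper packages the replacement step via the elementary bound $P(A_1)+\dots+P(A_n)\leq P(A_1,\dots,A_n)+n-1$ and organises the lemma by conditioning on $\la_A$ rather than splitting on $a_1$, but these are cosmetic differences; your error accounting (each $\delta$ appearing twice, once in $S_1$ and once in the corresponding $S_i$) and your observation that condition~\eqref{eq:probabilistic-rc-extra} is what lets the per-switch bound survive conditioning on the remote inputs both match the paper exactly.
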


A violation of this inequality to the algebraic bound of~4 requires precisely the possibilistic properties expressed in Eqs.~\eqref{eq:switchy-ghz-cors} and~\eqref{eq:lma-data}, and is therefore attained by precisely those probabilistic data that also admit a possibilistic derivation of a contradiction with the conjunction of~\eqref{eq:all-inputs-possible} and Assumptions~\ref{ass:dco} and~\ref{ass:loc} as in Sections~\ref{sec:switch-data}--\ref{sec:locality}---including the data $Q$ predicted for the quantum switch scenario of Figure~\ref{fig:switchy-ghz}.

It is worth noting that, as Mermin did with his three-party GHZ scenario~\cite{Mer90a}, the three-switch scenario considered here can be suitably generalised to one with $N>3$ spacelike-separated switches. Definite causal order and relativistic causality then imply an inequality with bound $2^{\lfloor(3N-4)/2\rfloor}$, which is violated by the switch scenario to the algebraic maximum of $2^{2N-3}$. The proportional violation of the inequality thus grows exponentially with $N$.

    \section{Discussion}
    When it was shown in~\cite{vdLBC23, GP23} that the quantum switch exhibits indefinite causal order that can be device-independently tested, this meant that it predicts probabilities that 
cannot be recovered from a joint \emph{probability} distribution with a variable that constrains the causal order on \emph{every} run of the experiment (in such a way that the order is always in accordance with relativity theory).
Here, we have strengthened that result in two ways.

First, the argument in Sections~\ref{sec:switch-data}--\ref{sec:locality} relies only on the calculus of possibilities, rather than probabilities, so it eliminates the assumption that probability theory is applicable on the hidden variable level.
This assumption is nontrivial: even if the variables $\la_{A,B,C}$ are postulated to take values on every run of the experiment, their frequencies might not converge to probabilities satisfying the probability axioms.
And while there are empirical grounds for assuming such convergence in the case of observable quantities, these do not necessarily apply to the causal order variables, as those might be unobservable in principle.
In addition, the possibilistic independence assumptions involved in our argument of Sections~\ref{sec:switch-data}--\ref{sec:locality} are much weaker than the probabilistic ones required in~\cite{vdLBC23,GP23},
with relativistic causality conditions like~\eqref{eq:locality} merely requiring that a choice of input cannot influence the \emph{possibility}, rather than probability, of events that are spacelike-separated or lie to its past.

The maximal violation of the causal Mermin inequality~\eqref{eq:causal-mermin-ineq}, meanwhile, strengthens the results of~\cite{vdLBC23,GP23} by showing that the quantum switch (or any other data that reach the algebraic maximum) is not even compatible with a (probabilistic) hidden variable model that specifies a causal order \emph{sometimes}: any nonzero fraction of runs in which all operations are causally ordered and no superluminal influences occur leads to a value strictly below the algebraic bound.\footnote{In the terminology of~\cite{GP23}, this means that the data have a causally separable fraction of 0\% over the relevant spaces of input histories.}
In this sense the device-independent indefinite causal order in the quantum switch can be said to be maximal.
It is shown in Appendix~\ref{app:chained-ineqs} that this result can in fact also be achieved with a single quantum switch, by using chained CHSH inequalities~\cite{BC90} and taking an appropriate limit.
This is analogous to the observation in the context of Bell nonlocality that quantum correlations in the GHZ and chained CHSH scenario are maximally Bell-nonlocal: they have no Bell-local fraction~\cite{BKP06}.

An important way in which this work, as well as Refs.~\cite{vdLBC23,GP23,HO21}, differ from most other literature on indefinite causal order is that they consider constraints on the allowed causal orders.
Here and in~\cite{vdLBC23}, these constraints were motivated by the principle of relativistic causality, but it is worth noting that they could also be motivated differently.
Ref.~\cite{GP23} propose a general framework for constrained indefinite causality.
See Appendix~\ref{app:three-switches} for a comparison of the present work with a result from~\cite{GP23}.

Our proofs here are much like those of Bell nonlocality by GHZ and Mermin, except that the existence of hidden variables predetermining measurement outcomes is derived from the definite causal order assumption together with properties of the observed data, rather than directly assumed or concluded from an EPR argument.
This connection between Bell nonlocality and indefinite causal order, initially found in Ref.~\cite{vdLBC23} in the context of a CHSH inequality, thus extends to possibilistic arguments as well.
While this article makes this explicit in the GHZ scenario, a Hardy-type argument~\cite{Har93} can also be constructed.
This work and Ref.~\cite{vdLBC23} can, in light of this connection with Bell nonlocality, be compared to recent works on extended Wigner's friend scenarios (see for instance Refs.~\cite{Bong+20,HC22,OVB23}), in which not a definite causal order, but the observation of an agent is what predetermines a measurement outcome.%
\footnote{
    Our notion of relativistic causality is analogous to what is called Locality or Local Agency in~\cite{Bong+20,HC22}, and to what is known as parameter independence (and measurement independence) in the context of Bell's theorem~\cite{Shi86}.
    It is strictly weaker than the kind of locality already ruled out by Bell~\cite{Bell64}, GHZ~\cite{GHSZ90} and Mermin~\cite{Mer90}.
    On its own it is consistent with quantum physics; only in the presence of another assumption, like definite causal order (Theorems~\ref{thm:possibilistic} and~\ref{thm:causal-mermin} and~\cite{vdLBC23}), absoluteness of an agent's observation~\cite{Bong+20, HC22}, determinism (Section~\ref{sec:vanilla-ghz} and~\cite{Bell64,GHZ89}), or alternatively, outcome independence~\cite{Shi86,Mer90a}, can contradictions be derived.
}
It will be worth investigating what other device-independent tests can be devised using this method, going beyond causal order and Wigner's friend scenarios.

Another important direction for future work is to understand the physical implications of experimental violations of the inequalities presented here and in Ref.~\cite{vdLBC23}.
The notion of causal order, and whether it is definite or not, naturally depends on what is being ordered, i.e.\ on what are taken as the relata of the causal relations.
In the device-independent treatment, it is customary to assume that each relatum covers both the choosing of an input and the generation of an output.
This is also the approach taken here and in Refs.~\cite{vdLBC23,GP23}.
This tradition goes back to Bell's theorem, in which such input-output pairs can be confined to a relatively small spacetime region.
Current implementations of the quantum switch however necessitate that the inputs of the operations inside the switch are chosen early, and the outcomes are only measured after execution of the switch (e.g.~\cite{Rub+17,Cao+23}).
Consequently, (classical) relativity theory does not rule out two-way communication between the regions occupied by the input-output pairs.
Is indefinite causal order with respect to these relata then still an interesting physical phenomenon, or is another choice of relata more natural?
Answering these questions might require a more in-depth analysis which investigates whether such two-way communication is indeed facilitated by the used experimental setups.

    \section*{Acknowledgements}
    We would like to thank Jonathan Barrett, Giulio Chiribella, Marc-Olivier Renou and Nicola Pinzani for insightful discussions.
    This research was funded in part by the Engineering and Physical Sciences Research Council (EPSRC),
    and was supported by the John Templeton Foundation through the ID\# 62312 grant, as part of the \href{https://www.templeton.org/grant/the-quantum-information-structure-of-spacetime-qiss-second-phase}{‘The Quantum Information Structure of Spacetime’} Project (QISS). The opinions expressed in this publication are those of the authors and do not necessarily reflect the views of the John Templeton Foundation. For the purpose of Open Access, the authors have applied a CC BY public copyright licence to any Author Accepted Manuscript (AAM) version arising from this submission.

    \bibliographystyle{quantum}
    \bibliography{zotero}

    \onecolumn
    \clearpage
    \newgeometry{left=3cm,right=3cm,top=3cm,bottom=3cm}
    \appendix

    \section{The quantum switch correlations and formalisation of Observations~\ref{fct:x-are-0} and~\ref{fct:x-are-1}}
    \label{app:switch-data}
    Here we describe in more detail the outcome probabilities (and thus possibilities) arising in the three-switch scenario described in the main text.
We then state and prove formal versions of Observations~\ref{fct:x-are-0} and~\ref{fct:x-are-1}.

The interventions of each of the nine agents $\pA_{1,2,3},\pB_{1,2,3},\pC_{1,2,3}$ can, for each value of their classical input variable, be described by a quantum instrument, i.e.\ a quantum operation with a classical outcome.
A quantum instrument, with input and output systems described by Hilbert spaces $\H$ and $\K$ and with finite classical outcome set $E$, is a collection of completely positive (CP) maps $\{\E^e : \cB(\H) \to \cB(\K)\}_{e\in E}$ that sum to a trace-preserving map $\sum_{e\in E}\E^e$. (Here $\cB(\H)$ and $\cB(\K)$ are the spaces of bounded linear operators on $\H$ and $\K$.) Each map $\E^e$ represents the quantum operation that ends up being performed in case outcome $e$ is observed.

For $i=1,2$, the instrument performed by $\pA_i$ on the target system $T_A$ for a fixed value of $x_i$, described intuitively in the main text, is defined by the CP maps $\cA_i^{a_i|x_i}:\cB(\H_{T_A})\to\cB(\H_{T_A})$ given by
\begin{subequations}
    \begin{align}
        \cA_i^{a_i|x_i=0} &= \delta_{a_i=0} \cdot \id_{T_A} \\
        \text{and\quad}\cA_i^{a_i|x_i=1}(\rho_{T_A}) &= \op{1}{a_i}\!\rho_{T_A}\!\op{a_i}{1}, \label{eq:A1x1-instr}
    \end{align}
\end{subequations}
where $\id_{T_A}$ is the identity operation on $\cB(\H_{T_A})$ and $\delta_{a_i=0}$ is $1$ if $a_i=0$ and 0 otherwise.
$\pA_3$ performs a projective measurement in the $Y$ basis, described by CP maps $\cA_3^{a_3} : \cB(\H_{C_A}) \to \cB(\mathbb C)$ with trivial output system:
\begin{equation}
    \cA_3^{a_3=0}(\rho_{C_A}) = \ev{\rho_{C_A}}{+i}; \qquad \cA_3^{a_3=1}(\rho_{C_A}) = \ev{\rho_{C_A}}{-i}.
\end{equation}

Together, when situated in one wing of the quantum switch setup considered in the main text, these instruments define the `joint' instrument $\{\cA^{\vec a|\vec x} : \cB(\H_{C_A})\to\cB(\H_{C_A})\}_{\vec a}$, with
\begin{equation}
    \label{eq:joint-instrument}
    \cA^{\vec a|\vec x}(\rho_{C_A}) \coloneqq \left((\cA_3^{a_3} \tns \Tr_{T_A}) \circ \switch(\cA_1^{a_1|x_1}, \cA_2^{a_2|x_2}) \right)(\rho_{C_A}\tns\dyad0_{T_A}).
\end{equation}
The joint instruments $\cB^{\vec b|\vec y}$ and $\cC^{\vec c|\vec z}$ are defined similarly.
The joint probabilities of the outcomes $\vec a, \vec b, \vec c$ given inputs $\vec x, \vec y, \vec z$ are then given by the distribution
\begin{equation}
    Q(\vec a \vec b \vec c \mid \vec x \vec y \vec z) \coloneqq \left(\cA^{\vec a|\vec x} \tns \cB^{\vec b|\vec y} \tns \cC^{\vec c|\vec z}\right)\left( \dyad{\GHZ}_{C_A C_B C_C} \right),
\end{equation}
which is defined diagrammatically~\cite{PQP} in Figure~\ref{fig:switchy-ghz}.
Finally, together with a distribution over the input variables $Q(\vec x\vec y\vec z)$, this gives a probability distribution $Q(\vec a \vec b \vec c \vec x \vec y \vec z)$, which in turn leads to the possibility distribution $q(\vec a \vec b \vec c \vec x \vec y \vec z)$ considered in the main text.

It is straightforward to verify that the marginals of this distribution satisfy the conditions~\eqref{eq:lma-data} of Lemma~\ref{lma}.
We can now also formalise and prove Observations~\ref{fct:x-are-0} and~\ref{fct:x-are-1} from the main text. This is done by Lemmas~\ref{lma:obs-1} and~\ref{lma:obs-2} below, respectively.
\begin{lemma}\label{lma:obs-1}
    We have
    \begin{equation}
        \cA^{\vec a|\vec x=\mathbf0}(\rho_{C_A}) = \delta_{a_1=a_2=0} \cdot \cA_3^{a_3}(\rho_{C_A}).
    \end{equation}

    \begin{proof}
        This follows from the fact that $\switch\left(\id_{T_A}, \id_{T_A}\right) = \id_{C_A T_A}$, which is evident from the definition of the quantum switch in Eq.~\eqref{eq:switch-defn}.
    \end{proof}
\end{lemma}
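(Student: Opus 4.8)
The plan is to evaluate the joint instrument $\cA^{\vec a|\vec x=\mathbf0}$ directly from its definition in Eq.~\eqref{eq:joint-instrument}, the whole point being that at $\vec x=\mathbf0$ the two interventions inside the switch become scalar multiples of the identity, so that the switch itself degenerates. First I would substitute the explicit instruments $\cA_i^{a_i|x_i=0} = \delta_{a_i=0}\cdot\id_{T_A}$ into the argument of $\switch$ in~\eqref{eq:joint-instrument}. Each of these is a pure operation: since $\delta_{a_i=0}\in\{0,1\}$, it has the single Kraus operator $\delta_{a_i=0}\cdot\id_{T_A}$, equal to $\id_{T_A}$ when $a_i=0$ and to the zero operator when $a_i=1$. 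Making this observation explicit is what lets the definition~\eqref{eq:switch-defn} of the switch apply verbatim.

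Next I would compute the switch operator $W$ from~\eqref{eq:switch-defn} with these Kraus operators in its two slots. Writing $E$ and $F$ for the Kraus operators of $\cA_1^{a_1|x_1=0}$ and $\cA_2^{a_2|x_2=0}$, both products $FE$ and $EF$ equal $\delta_{a_1=a_2=0}\cdot\id_{T_A}$ (using $\delta^2=\delta$ for $\delta\in\{0,1\}$), so that
\begin{equation}
    W = \delta_{a_1=a_2=0}\cdot\big(\ket{+}\bra{+}_{C_A} + \ket{-}\bra{-}_{C_A}\big)\tns\id_{T_A} = \delta_{a_1=a_2=0}\cdot\id_{C_A T_A},
\end{equation}
where I have used $\ket{+}\bra{+}_{C_A} + \ket{-}\bra{-}_{C_A} = \id_{C_A}$. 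Hence $\switch(\cA_1^{a_1|x_1=0}, \cA_2^{a_2|x_2=0}) = \delta_{a_1=a_2=0}\cdot\id_{C_A T_A}$, which is exactly the factored form of the stated identity $\switch(\id_{T_A},\id_{T_A}) = \id_{C_A T_A}$, now carrying the Kronecker delta that records the two target outcomes.

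It then remains to feed this back into~\eqref{eq:joint-instrument}. With the switch collapsed, the expression reduces to $\delta_{a_1=a_2=0}\cdot(\cA_3^{a_3}\tns\Tr_{T_A})(\rho_{C_A}\tns\dyad{0}_{T_A})$; since $\cA_3^{a_3}$ acts on $C_A$ alone and $\Tr_{T_A}\dyad{0}_{T_A}=1$, the target factor contributes only the scalar $1$, leaving $\delta_{a_1=a_2=0}\cdot\cA_3^{a_3}(\rho_{C_A})$, as required. I do not anticipate any genuine difficulty here; the only point deserving care is the justification for pulling the deltas out of the switch, i.e.\ checking that a zero Kraus operator in one slot annihilates $W$. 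This is immediate from~\eqref{eq:switch-defn} once one notes that the $x_i=0$ instruments are pure operations, which is precisely why I would record that fact at the outset.
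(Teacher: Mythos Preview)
Your proposal is correct and follows exactly the paper's approach: the paper's one-line proof simply invokes $\switch(\id_{T_A},\id_{T_A})=\id_{C_A T_A}$ from~\eqref{eq:switch-defn}, and you have spelled out that computation explicitly, carrying the Kronecker deltas along and then tracing out $T_A$. There is no substantive difference between the two arguments.
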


\begin{lemma}\label{lma:obs-2}
    We have
    \begin{equation}
        \sum_{a_2 a_3} \cA^{\vec a|\vec x=\mathbf1}(\rho_{C_A}) =
        \begin{cases}
            \ev{\rho_{C_A}}{+} & \text{ if } a_1=0; \\
            \ev{\rho_{C_A}}{-} & \text{ if } a_1=1.
        \end{cases}
    \end{equation}

    \begin{proof}
        We prove the fact for pure states $\rho_{C_A} = \dyad\psi_{C_A}$; by linearity this implies the result for mixed states too.
        From the definition of the quantum switch~\eqref{eq:switch-defn} and the quantum instruments~(\ref{eq:A1x1-instr},~\ref{eq:joint-instrument}), we have
        \begin{equation}
            \begin{split}
                \cA^{\vec a|\vec x=\mathbf1}(\dyad\psi_{C_A}) = \Big\| \bra{(-1)^{a_3}i}
                \Big( & \op{+}_{C_A} \tns \ket{1}\ip{a_2}{1}\bra{a_1}_{T_A} + \\
                & \op{-}_{C_A} \tns \ket{1}\ip{a_1}{1}\bra{a_2}_{T_A} \Big) \ket\psi_{C_A}\ket0_{T_A}
                \Big\|^2.
            \end{split}
        \end{equation}
        If $a_1=0$ then the second term (in which $\pA_1$ is after $\pA_2$) vanishes, giving
        \begin{equation}
            \cA^{\vec a|\vec x=\mathbf1}(\dyad\psi_{C_A}) = \left| \ip{(-1)^{a_3}i}{+}_{C_A} \ip{+}{\psi}_{C_A} \cdot\delta_{a_2=1} \right|^2
        \end{equation}
        so that
        \begin{equation}
            \sum_{a_2 a_3} \cA^{\vec a|\vec x=\mathbf1}(\dyad\psi_{C_A}) = \left|\ip{+}{\psi}\right|^2.
        \end{equation}
        The case for $a_1=1$ is similar.
    \end{proof}
\end{lemma}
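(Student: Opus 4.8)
The plan is to make rigorous the mechanism behind Observation~\ref{fct:x-are-1}: the orthogonality between the initial target state $\ket0_{T_A}$ and the state $\ket1_{T_A}$ reprepared by $\pA_2$ ties the outcome $a_1$ to a single branch of the control superposition, hence to the internal order of the switch. First I would reduce to pure inputs $\rho_{C_A}=\dyad\psi_{C_A}$: the map $\cA^{\vec a|\vec x=\mathbf1}$ of~\eqref{eq:joint-instrument} is a composition of linear (CP) maps and both sides of the claimed identity are linear in $\rho_{C_A}$, so the mixed-state case follows by linearity once the pure case is established.

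For a pure input I would rewrite the scalar $\cA^{\vec a|\vec x=\mathbf1}(\dyad\psi)$ as a norm over $T_A$. Since $\cA_3^{a_3}(\sigma)=\ev{\sigma}{(-1)^{a_3}i}$ and $T_A$ is traced out, one checks that $\cA^{\vec a|\vec x=\mathbf1}(\dyad\psi)=\big\|\,\bra{(-1)^{a_3}i}_{C_A}\,W\,\ket\psi_{C_A}\ket0_{T_A}\big\|^2$, where $W$ is the switch operator~\eqref{eq:switch-defn}. For $x_1=x_2=1$ the instruments~\eqref{eq:A1x1-instr} have single Kraus operators $E=\ket1\bra{a_1}$ and $F=\ket1\bra{a_2}$, giving $FE=\ip{a_2}{1}\ket1\bra{a_1}=\delta_{a_2=1}\ket1\bra{a_1}$ and $EF=\ip{a_1}{1}\ket1\bra{a_2}=\delta_{a_1=1}\ket1\bra{a_2}$, so that $W=\delta_{a_2=1}\,\op{+}_{C_A}\tns\ket1\bra{a_1}+\delta_{a_1=1}\,\op{-}_{C_A}\tns\ket1\bra{a_2}$ for these outcomes.

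The crux is the orthogonality bookkeeping. Acting on $\ket0_{T_A}$ and using $\ip{a}{0}=\delta_{a=0}$, the $\op{+}_{C_A}$ branch (carrying $FE$, the order $\pA_1$-before-$\pA_2$) survives only when $a_1=0$ and $a_2=1$, while the $\op{-}_{C_A}$ branch (carrying $EF$) survives only when $a_1=1$ and $a_2=0$. Thus for $a_1=0$ the control is projected onto $\ket+$ and for $a_1=1$ onto $\ket-$; writing $\ket\pm$ for this branch, the surviving vector factorizes and the norm-squared becomes $\big|\ip{(-1)^{a_3}i}{\pm}\big|^2\,\big|\ip{\pm}{\psi}\big|^2$, multiplied by a Kronecker delta fixing $a_2=1$ (for $a_1=0$) or $a_2=0$ (for $a_1=1$).

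Finally I would sum over $a_2$ and $a_3$. The delta collapses the $a_2$-sum to $1$, and $\sum_{a_3}\big|\ip{(-1)^{a_3}i}{\pm}\big|^2=\tfrac12+\tfrac12=1$ because a fixed $X$-basis state overlaps both $Y$-basis states equally. What remains is $\big|\ip{\pm}{\psi}\big|^2=\ev{\dyad\psi}{\pm}=\ev{\rho_{C_A}}{\pm}$, i.e.\ $\ev{\rho_{C_A}}{+}$ for $a_1=0$ and $\ev{\rho_{C_A}}{-}$ for $a_1=1$, which is the claim. I expect the only delicate step to be the bookkeeping of the third paragraph---correctly matching each value of $a_1$ to the unique control branch and target outcome $a_2$ it is compatible with; the rest is routine overlap arithmetic.
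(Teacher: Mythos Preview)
Your proposal is correct and follows essentially the same route as the paper's proof: reduce to pure states by linearity, expand the switch Kraus operator with $E=\ket1\bra{a_1}$ and $F=\ket1\bra{a_2}$, use the orthogonality $\ip{a_1}{1}=\delta_{a_1=1}$ and $\ip{a_i}{0}=\delta_{a_i=0}$ to kill one control branch, and then sum over $a_2,a_3$. The paper is terser (it does not spell out the $\sum_{a_3}\lvert\ip{(-1)^{a_3}i}{\pm}\rvert^2=1$ step or the $a_2$-delta collapse explicitly), but the argument is the same.
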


Together with the known GHZ correlations~\eqref{eq:ghz-cors}, these facts imply the correlations in~\eqref{eq:switchy-ghz-cors}, used in our possibilistic argument, as well as the fact that the quantum switch setup violates, with the choice of instruments described here, the causal Mermin inequality~\eqref{eq:causal-mermin-ineq} to the algebraic bound of 4.

    \section{On conditional independences and possibility distributions}
    \label{app:independence-notation}
    Throughout this work we use the notation $\indep$ for conditional independence relations in both possibility and probability distributions.
Here we discuss in some more detail how this notation relates to that more commonly found in the literature on Bell nonlocality and causal inequalities (see e.g.\ Refs.~\cite{Bran+15,Bong+20,Shi86}).

Consider first the case for probability distributions, discussed explicitly only in Appendix~\ref{app:mermin-ineq}.
As is somewhat customary, we will abuse notation by suppressing the distinction between random variables and the values that they take,
so that $P(abc)$ sometimes refers to a probability, and sometimes to a probability distribution.
It will be convenient to use the convention that the conditional probability $P(ab|c)$ is $P(abc)/P(c)$ if $P(c) > 0$, and $0$ if $P(c) = 0$.
Conditional independence of $a$ and $b$ given $c$ can then be defined as
\begin{equation}\label{eq:probabilistic-conditional-independence}
    \quad a\indep_P b\mid c \quad :\!\iff \quad \forall a,b,c: P(ab\mid c) = P(a\mid c)P(b\mid c).
\end{equation}
Just in case $P(b|c) > 0$ for all $b,c$, this condition is equivalent to
\begin{equation}
    \forall a,b,c: P(a\mid bc) = P(a\mid c), \quad\text{ or, in other words, } \quad \forall a,b,b',c: P(a\mid bc) = P(a\mid b'c).
\end{equation}

As an example, the condition $a_1 \indep_P x_2 \mid \la_A=0$ in Eq.~\eqref{eq:probabilistic-dco-a} in Appendix~\ref{app:mermin-ineq}---which encodes part of the conditions of a definite causal order~\cite{Bran+15} and is analogous to the \emph{possibilistic} condition found in Eq.~\eqref{eq:hco-def-a1x2}---is equivalent to
\begin{equation}
    \forall a_1,x_2,x_2': P(a_1 \mid x_2, \la_A = 0) = P(a_1 \mid x_2', \la_A=0)
\end{equation}
in case $P(x_2 \mid \la_A = 0) > 0$ for all $x_2$.
Consider, as another example, the condition $\la_A b_3 c_3 \indep_P \vec x \mid \vec y \vec z$ from Eq.~\eqref{eq:probabilistic-rc}, analogous to the possibilistic condition stated in~\eqref{eq:locality-2}.
This implies the weaker condition $b_3 c_3 \indep_P \vec x \mid \la_A \vec y \vec z$, which can also be stated as
\begin{equation}
    \forall b_3,c_3,\la_A,\vec x, \vec x', \vec y, \vec z: P(b_3 c_3 \mid \la_A \vec x \vec y \vec z) = P(b_3 c_3 \mid \la_A \vec x' \vec y \vec z)
\end{equation}
so long as $P(\la_A \mid \vec y\vec z) > 0$ for all values of $\la_A,\vec y$, and $\vec z$.
This is the type of condition variably referred to as `parameter independence'~\cite{Shi86}, `locality'~\cite{Bong+20}, or `no signalling on the hidden-variable level' in Bell nonlocality and related literature.

\

Possibilities take only the values $0$ (impossible) and $1$ (possible), which satisfy, in particular, $1+1=1$ (see Section~\ref{sec:notation}).
For possibilities, assuming strict positivity is often undesirable; and while one may still define conditional possibilities similarly to the way done for probabilities, they are less meaningful.%
\footnote{This is one reason why, in the main text, we treat the setting variables $\vec x, \vec y, \vec z$ as part of the distribution $p(\vec a \vec b \vec c \vec x \vec y \vec z)$ rather than as conditional variables in a conditional distribution $p(\vec a \vec b \vec c \mid \vec x \vec y \vec z)$.}
One can still reason about conditional independence relations as follows.
First of all, note that a natural notion of \emph{un}conditional independence of variables $a,b$ in a possibility distribution $p(ab)$ is that
\begin{equation}
    \begin{split}
        a\indep_p b \quad:\!&\iff\quad \forall a,b: p(ab) = p(a)p(b) \\
        &\iff\quad \forall a,b: p(a) = p(b) = 1 \Rightarrow p(ab) = 1;
    \end{split}
\end{equation}
in other words, if $a$ and $b$ are individually possible, then they are jointly possible.
An appropriate notion of conditional independence of $a$ and $b$ given $c$ is, by extension, that joint possibility of $a$ and $c$ together with joint possibility of $b$ and $c$ imply joint possibility of $a,b$ and $c$ all together:
\begin{equation}
    \begin{split}
        a\indep_p b \mid c \quad:\!&\iff\quad \forall a,b,c: p(abc) = p(ac)p(bc) \\
        &\iff\quad \forall a,b,c: p(ac) = p(bc) = 1 \Rightarrow p(abc) = 1.
    \end{split}
\end{equation}

In the case of a possibility distribution $p(abc) = \pi(P(abc))$ arising from a probability distribution $P(abc)$ via the map $\pi$ defined in Section~\ref{sec:notation}, probabilistic independence implies possibilistic independence: $a\indep_P b \mid c \implies a\indep_{p} b \mid c$.
The latter condition, however, is strictly weaker.

    \section{Proof of Theorem~\ref{thm:causal-mermin}}
    \label{app:mermin-ineq}
    In the below, $\indep_P$ denotes \emph{probabilistic} conditional independence in a probability distribution $P$, as defined in Eq.~\eqref{eq:probabilistic-conditional-independence}.
The following is a restatement of Theorem~\ref{thm:causal-mermin}.

\begin{theorem*}
    \label{thm:causal-mermin-app}
    Let $P(a_1 a_2 a_3 b_1 b_2 b_3 c_1 c_2 c_3 x_1 x_2 y_1 y_2 z_1 z_2 \la_A \la_B \la_C) \eqqcolon P(\vec a \vec b \vec c \vec x \vec y \vec z \la_A \la_B \la_C)$ be a probability distribution on variables taking values in $\{0,1\}$ that satisfies the following conditions, physically motivated as in the main text by the assumptions of free interventions~\eqref{eq:probabilistic-all-inputs-possible}, definite causal order~\eqref{eq:probabilistic-dco-a}--\eqref{eq:probabilistic-dco-c}, and relativistic causality~\eqref{eq:probabilistic-rc}--\eqref{eq:probabilistic-rc-extra}:%
    \footnote{Note that unlike the other conditions in~\eqref{eq:probabilistic-conditions}, condition~\eqref{eq:probabilistic-rc-extra} does not have a counterpart in the main text.
    In brief, that is because it is not needed when assuming the perfect theoretical predictions of the quantum switch data, whilst in this appendix we wish to accommodate for noise or data that otherwise deviate from those of the quantum switch.
    The justification for~\eqref{eq:probabilistic-rc-extra}, just like~\eqref{eq:probabilistic-rc}, is the principle of relativistic causality.}
    \begin{subequations}
        \label{eq:probabilistic-conditions}
        \begin{align}
            \forall \vec x \vec y \vec z: P(\vec x \vec y \vec z) &> 0; \label{eq:probabilistic-all-inputs-possible} \\
            a_1\indep_P x_2 \mid \la_A=0;  \qquad & a_2\indep_P x_1 \mid \la_A=1; \label{eq:probabilistic-dco-a}\\
            b_1\indep_P y_2 \mid \la_B=0;  \qquad & b_2\indep_P y_1 \mid \la_B=1; \\
            c_1\indep_P z_2 \mid \la_C=0;  \qquad & c_2\indep_P z_1 \mid \la_C=1; \label{eq:probabilistic-dco-c}\\
            \la_A\la_B\la_C \indep_P \vec x\vec y\vec z; \qquad \la_A b_3 c_3 \indep_P \vec x \mid \vec y\vec z; \qquad & a_3 \la_B c_3 \indep_P \vec y \mid \vec x\vec z; \qquad a_3 b_3 \la_C \indep_P \vec z \mid \vec x\vec y; \label{eq:probabilistic-rc} \\
            a_1 a_2 \la_A \indep_P \vec y \vec z \mid \vec x; \qquad b_1 b_2 \la_B \indep_P \vec x &\vec z \mid \vec y; \qquad c_1 c_2 \la_C \indep_P \vec x \vec y \mid \vec z. \label{eq:probabilistic-rc-extra}
        \end{align}
    \end{subequations}
    Then $P$ satisfies the inequality
    \begin{alignat}{5}  
        & P(a_1 \oplus b_1\oplus c_1 = 0 \mid \vec x=\mathbf1,\vec y=\mathbf1,\vec z = \mathbf1)
        + P(a_1 \oplus b_3\oplus c_3 = 1 \mid \vec x=\mathbf1,\vec y=\mathbf0,\vec z = \mathbf0) \span \span \span \span \span \nonumber \\
        +{} & P(a_3 \oplus b_1\oplus c_3 = 1 \mid \vec x=\mathbf0,\vec y=\mathbf1,\vec z = \mathbf0)
        + P(a_3 \oplus b_3\oplus c_1 = 1 \mid \vec x=\mathbf0,\vec y=\mathbf0,\vec z = \mathbf1)  \span \span \span \span \span \nonumber \\
        -{} 2\big[\ &P(a_1=1 \mid x_1x_2=10) &&+{} P(a_2=1 \mid x_1 x_2 = 01) &&+{} P(a_1 a_2=00 \mid x_1x_2=11) &&& \label{eq:causal-mermin-ineq-app} \\
        +{} &P(b_1=1 \mid y_1y_2=10) &&+{} P(b_2=1 \mid y_1 y_2 = 01) &&+{} P(b_1 b_2=00 \mid y_1y_2=11) \nonumber \\
        +{} &P(c_1=1 \mid z_1z_2=10) &&+{} P(c_2=1 \mid z_1 z_2 = 01) &&+{} P(c_1 c_2=00 \mid z_1z_2=11)  \big] &&\leq 3. \nonumber
    \end{alignat}
\end{theorem*}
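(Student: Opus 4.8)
The plan is to lift the three-step possibilistic argument of Sections~\ref{sec:switch-data}--\ref{sec:locality} to the level of probabilities. Concretely I would (i) use relativistic causality to pull the four correlation terms back onto a single reference distribution over the hidden orders $\la_A,\la_B,\la_C$ and the $Y$-basis outcomes $a_3,b_3,c_3$, on which the parity obstruction behind~\eqref{eq:switchy-ghz-hv-loc-cors} caps the ``idealised'' correlations at $3$; and (ii) account for replacing the measured outcomes $a_1,b_1,c_1$ by $\la_A,\la_B,\la_C$ through the nine error terms.

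For step~(i) I would set $\mu\coloneqq P(\la_A\la_B\la_C a_3 b_3 c_3\mid\vec x=\vec y=\vec z=\mathbf0)$ and check, using~\eqref{eq:probabilistic-rc}, that the four idealised correlations are marginals of $\mu$. Since $\la_A\la_B\la_C\indep_P\vec x\vec y\vec z$, the first becomes $\mu(\la_A\oplus\la_B\oplus\la_C=0)$; since $\la_A b_3 c_3\indep_P\vec x\mid\vec y\vec z$, I can slide $\vec x=\mathbf1$ down to $\mathbf0$ in the second, giving $\mu(\la_A\oplus b_3\oplus c_3=1)$, and the last two terms follow symmetrically from the remaining two conditions of~\eqref{eq:probabilistic-rc}. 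As the parities $\la_A\oplus\la_B\oplus\la_C$, $\la_A\oplus b_3\oplus c_3$, $a_3\oplus\la_B\oplus c_3$, $a_3\oplus b_3\oplus\la_C$ sum to $0$ whereas their targets $0,1,1,1$ sum to $1$, no outcome of $\mu$ satisfies all four; hence their $\mu$-probabilities sum to at most $3$.

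Step~(ii) is a quantitative version of Lemma~\ref{lma}: for each wing I would prove the decoupling bound $P(a_1\neq\la_A\mid\vec x=\mathbf1)\leq\e_A$, with $\e_A\coloneqq P(a_1=1\mid x_1x_2=10)+P(a_2=1\mid x_1x_2=01)+P(a_1a_2=00\mid x_1x_2=11)$, following the two cases in the proof of Lemma~\ref{lma}. The event $\{a_1=1,\la_A=0\}$ is bounded by the first term, using definite causal order and the independence of $\la_A$ from the inputs exactly as in~\eqref{eq:lma-prf-pt1}; the event $\{a_1=0,\la_A=1\}$ is split on $a_2$, yielding the third term when $a_2=0$ and, through the second use of definite causal order as in~\eqref{eq:lma-prf-pt2}, the second term when $a_2=1$. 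The extra condition~\eqref{eq:probabilistic-rc-extra} makes this defect independent of $\vec y,\vec z$, so the same $\e_A$ controls the error in both correlation terms in which $a_1$ appears. Since each of $a_1,b_1,c_1$ occurs in exactly two of the four terms, substituting all three shifts the measured sum from its idealised value by at most $2(\e_A+\e_B+\e_C)$, and combining with the bound of~$3$ yields~\eqref{eq:causal-mermin-ineq-app}.

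The hard part will be the decoupling bound, where transcribing the Boolean steps of Lemma~\ref{lma} is not automatic. The possibilistic independence $p(abc)=p(ac)p(bc)$ makes the corresponding manipulation free, but its probabilistic counterpart~\eqref{eq:probabilistic-dco-a} is marginal over the earlier party's own setting $x_1$, whereas $P(a_1=1,\la_A=0\mid\vec x=\mathbf1)\leq P(a_1=1\mid x_1x_2=10)$ needs insensitivity of $a_1$ to $x_2$ at \emph{fixed} $x_1$, i.e.\ the conditional-on-own-setting form $a_1\indep_P x_2\mid\la_A=0,x_1$ of definite causal order. I would read~\eqref{eq:probabilistic-dco-a} in this (physically intended) sense and verify that the three error terms of each wing then close the case analysis with the correct weight~$2$; the parity identity of step~(i) is by comparison a free combinatorial endpoint.
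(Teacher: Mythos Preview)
Your proposal is correct and follows essentially the same route as the paper: the paper's proof likewise establishes the decoupling bound $P(a_1=\la_A\mid\vec x=\mathbf1)\geq 1-\a$ as a Claim, then uses relativistic causality to slide all four terms onto the common context $\vec x=\vec y=\vec z=\mathbf0$ and bounds the idealised sum by $3$ via the union-bound fact $\sum_i P(A_i)\leq P(\bigcap_i A_i)+n-1$, incurring exactly the substitution cost $2(\a+\b+\ga)$ you compute. Your concern about needing the conditional-on-$x_1$ form of~\eqref{eq:probabilistic-dco-a} is well placed---the paper uses exactly this reading, without comment, when it writes $P^0(a_1{=}1\mid 11)=P^0(a_1{=}1\mid 10)$.
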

\begin{proof}
    The proof consists of two parts: we will first see that definite causal order~\eqref{eq:probabilistic-dco-a}--\eqref{eq:probabilistic-dco-c} implies some amount of determinism, quantified by the last three lines of the inequality, and then how that determinism, together with relativistic causality~\eqref{eq:probabilistic-rc}--\eqref{eq:probabilistic-rc-extra}, provides a Mermin-like bound on the first two lines of the inequality.
    (The first part parallels our going from~\eqref{eq:switchy-ghz-cors} to~\eqref{eq:switchy-ghz-hv-cors} in the main text, while the second parallels the transition to~\eqref{eq:switchy-ghz-hv-loc-cors} by relativistic causality.) 

    Denote the final three lines in the inequality by $\a,\b$ and $\ga$:
    \begin{subequations}
        \begin{align}
            \a[P] &\coloneqq P(a_1=1 \mid x_1 x_2=10) + P(a_2=1 \mid x_1 x_2=01) + P(a_1 a_2=00 \mid x_1 x_2=11); \\
            \b[P] &\coloneqq P(b_1=1 \mid y_1 y_2=10) + P(b_2=1 \mid y_1 y_2=01) + P(b_1 b_2=00 \mid y_1 y_2=11); \\
            \ga[P] &\coloneqq P(c_1=1 \mid z_1 z_2=10) + P(c_2=1 \mid z_1 z_2=01) + P(c_1 c_2=00 \mid z_1 z_2=11).
        \end{align}
    \end{subequations}
    (The probabilities are independent of the omitted input variables by~\eqref{eq:probabilistic-rc-extra}.)
    The following mirrors Lemma~\ref{lma} in the main text, but is probabilistic and robust to noise in the values of $\a[P],\b[P],$ and $\ga[P]$, which are exactly zero in the ideal quantum switch scenario described in the main text.

    \begin{claim*}
        We have
        \begin{subequations}
            \label{eq:a1-la-alpha-etc}
            \begin{align}
                P(a_1=\la_A \mid \vec x = \mathbf1) &\geq 1-\a[P]; \label{eq:a1-la-alpha}\\
                P(b_1=\la_B \mid \vec y = \mathbf1) &\geq 1-\b[P]; \label{eq:b1-la-beta} \\
                P(c_1=\la_C \mid \vec z = \mathbf1) &\geq 1-\ga[P]. \label{eq:c1-la-gamma}
            \end{align}
        \end{subequations}
        \begin{proof}
            In the below, the unnamed conditioned-upon variables are always $x_1$ and $x_2$.
            Using~\eqref{eq:probabilistic-dco-a}, we get
            \begin{align}
                P(a_1=1\mid 11, \la_A=0) &= P(a_1=1\mid 10, \la_A=0) \leq \a[P(\,\cdot\mid\la_A=0)] \text{\quad and } \\
                \begin{split}
                    P(a_1=0\mid 11, \la_A=1) &= P(a_1=0, a_2=0\mid 11, \la_A=1) + P(a_1=0,a_2=1\mid 11, \la_A=1) \\
                    &\leq P(a_1=0, a_2=0\mid 11,\la_A=1) + P(a_2=1\mid 11,\la_A=1)  \\
                    &= P(a_1=0, a_2=0\mid 11, \la_A=1) + P(a_2=1\mid 01, \la_A=1) \\
                    &\leq \a[P(\,\cdot\mid\la_A=1)].
                \end{split}
            \end{align}
            Now, using $\la_A\la_B\la_C \indep_P \vec x\vec y\vec z$ from~\eqref{eq:probabilistic-rc}, we can conclude
            \begin{align}
                P(a_1\neq \la_A\mid 11) &= P(\la_A=0)P(a_1=1\mid 11,\la_A=0) + P(\la_A=1)P(a_1=0\mid 11,\la_A=1) \\
                &\leq P(\la_A=0) \a[P(\,\cdot\mid\la_A=0)] + P(\la_A=1) \a[P(\,\cdot\mid\la_A=1)] = \a[P],
            \end{align}
            which is~\eqref{eq:a1-la-alpha}. Eqs.~\eqref{eq:b1-la-beta} and~\eqref{eq:c1-la-gamma} follow similarly.
        \end{proof}
    \end{claim*}

    Thus, if $\a[P]$ were $0$, as for the theoretical quantum switch data discussed in the main text, then we would have $a_1=\la_A$ with certainty whenever $\vec x=\mathbf 1$. We could then replace terms like $P(a_1\oplus b_1\oplus c_1=0\mid \vec x=\vec y=\vec z=1)$ in inequality~\eqref{eq:causal-mermin-ineq-app} with terms like $P(\la_A\oplus \la_B\oplus \la_C=0\mid \vec x=\vec y=\vec z=1)$ and derive Mermin's inequality directly.
    To generalise this to other values of $\a$ (thereby making the result robust to noise), we use the following fact about probabilities.
    \begin{lemma}
        \label{lma:probability-fact}
        For a probability distribution $P$ over a set $\Omega$ and events $A_1,A_2,\dots,A_n\subseteq\Omega$, we have
        \begin{equation}
            P(A_1)+P(A_2)+\cdots+P(A_n) \leq P(A_1, A_2,\dots, A_n) + n - 1.
        \end{equation}
        \begin{proof}
            This follows from $P(A_1, A_2) = P(A_1)+P(A_2)-P(A_1\lor A_2) \geq P(A_1)+P(A_2)-1$ for $n=2$ and induction.
        \end{proof}
    \end{lemma}

    From this we get, for instance,
    \begin{equation}
        \label{eq:halfway-there}
        \begin{split}
            P(a_1&\oplus b_3\oplus c_3 = 1 \mid \mathbf{100}) + P(a_1=\la_A\mid \mathbf{100}) \\
            &\leq P(a_1\oplus b_3\oplus c_3 = 1, a_1=\la_A \mid \mathbf{100}) + 1 \\
            &\leq P(\la_A\oplus b_3\oplus c_3 = 1 \mid \mathbf{100}) + 1 \\
            &=  P(\la_A\oplus b_3\oplus c_3 = 1 \mid \mathbf{000}) + 1,
        \end{split}
    \end{equation}
    where in the last equation we have used the relativistic causality condition~\eqref{eq:probabilistic-rc}.
    (Here we have suppressed the labels of the conditioned-upon variables, which are $\vec x$, $\vec y$, and $\vec z$, respectively.)
    Note that by condition~\eqref{eq:probabilistic-rc-extra} and the Claim above,
    \begin{equation}
        P(a_1 = \la_A \mid \mathbf{100}) = P(a_1 = \la_A \mid \vec x = \mathbf1) \geq 1-\a[P].
    \end{equation}
    This, together with~\eqref{eq:halfway-there}, implies~\eqref{eq:one-of-the-ineqs} below.
    The other inequalities below can be derived in a similar way.
    \begin{subequations}
        \begin{align}
            P(a_1\oplus b_1\oplus c_1 = 0 \mid \mathbf{111}) - \a[P] - \b[P] - \ga[P] &\leq P(\la_A\oplus \la_B\oplus \la_C = 0 \mid \mathbf{000}) \\
            P(a_1\oplus b_3\oplus c_3 = 1 \mid \mathbf{100}) - \a[P] &\leq P(\la_A\oplus b_3\oplus c_3 = 1 \mid \mathbf{000}) \label{eq:one-of-the-ineqs} \\
            P(a_3\oplus b_1\oplus c_3 = 1 \mid \mathbf{010}) - \b[P] &\leq P(a_3\oplus \la_B\oplus c_3 = 1 \mid \mathbf{000}) \\
            P(a_3\oplus b_3\oplus c_1 = 1 \mid \mathbf{001}) - \ga[P] &\leq P(a_3\oplus b_3\oplus \la_C = 1 \mid \mathbf{000})
        \end{align}
    \end{subequations}
    The desired inequality~\eqref{eq:causal-mermin-ineq-app} now follows by adding these together and noting that the sum of the right-hand sides is bounded (similarly to Mermin's inequality~\cite{Mer90a}) by 3. This last fact follows from Lemma~\ref{lma:probability-fact}, and parallels the derivation of the logical contradiction~\eqref{eq:switchy-ghz-hv-loc-cors} in the main text:
    \begin{equation}
        \label{eq:mermin-derivation}
        \begin{split}
            &\phantom{+ }P(\la_A\oplus \la_B\oplus \la_C = 0 \mid \mathbf{000}) + P(\la_A\oplus b_3\oplus c_3 = 1 \mid \mathbf{000}) \\
            &+ P(a_3\oplus \la_B\oplus c_3 = 1 \mid \mathbf{000}) + P(a_3\oplus b_3\oplus \la_C = 1 \mid \mathbf{000}) \\
            &\leq P(0 = 1 \mid \mathbf{000}) + 3 = 3.
        \end{split}
    \end{equation}
    This concludes the proof of Theorem~\ref{thm:causal-mermin}.
\end{proof}

    \section{Remark on the need for three quantum switches}
    \label{app:three-switches}
    In light of the single-switch CHSH scenario considered in~\cite{vdLBC23}, one might wonder whether a possibilistic or maximal proof of indefinite causal order under the assumption of relativistic causality is also possible in an alternative GHZ scenario that only has a quantum switch on one of the three arms of the GHZ experiment, with the other two systems simply being subjected to measurements with classical settings and outcomes.
Such a scenario is considered in Ref.~\cite{GP23}, section 2.7.2.
Whilst it is found that the probabilistic correlations observed in this scenario do exhibit indefinite causal order under the premise of relativistic causality (in the terminology of this paper), it is not maximal in the sense discussed here.
Concretely, in Ref.~\cite{GP23} it is found that the causally separable fraction is 41.2\%, rather than 0\%.

The observation that maximal nonlocality in GHZ-Mermin correlations can be carried over to maximal noncausality in the case of three quantum switches, but not in the case of one quantum switch, may be explained as follows.
The causal Mermin inequality~\eqref{eq:causal-mermin-ineq-app} for three switches was derived, in the proof in Appendix~\ref{app:mermin-ineq}, by---roughly---first realising that definite causal order implies some level of determinism of measurement outcomes~\eqref{eq:a1-la-alpha-etc}, and then using the fact that determinism and relativistic causality (read: parameter independence) together imply Mermin's inequality~\eqref{eq:mermin-derivation}.
In the presence of just one switch, only determinism of a measurement outcome in \emph{one} of the arms of the experiment can be deduced.
This amount of determinism is not sufficient to derive Mermin's inequality, even in the presence of relativistic causality.
Indeed, Mermin's inequality can be violated to the algebraic bound in a scenario where one of the three parties outputs a constant value, and the other two parties share a PR box.
(See~\eqref{eq:ghz-cors}; set $a=0$ and let $b,c$ be the outputs of a PR box with settings $y,z$.)
Crucially, PR boxes are not ruled out by the relativistic causality assumption.

This is in constrast to the CHSH inequality used in~\cite{vdLBC23}, which can be derived from relativistic causality (parameter independence) in conjunction with determinism of only one of the four measurement outcomes.

Determinism in two arms of the GHZ experiment \emph{is} sufficient to derive Mermin's inequality (cf.~\cite[Theorem 14]{HC24}), and so a setup with two quantum switches would exhibit \emph{maximal} indefinite causal order under the assumption of relativistic causality.
The \emph{possibilistic} argument developed in sections~\ref{sec:switch-data}--\ref{sec:locality} of the main text, on the other hand, intricately relies on the presence of all three quantum switches and their associated causal order variables $\la_A$, $\la_B$ and $\la_C$.

Appendix~\ref{app:chained-ineqs} below provides a certification which (in a limit) is maximal while using only one switch; this is based not on the GHZ experiment, but on a setup more akin to the one used in Ref.~\cite{vdLBC23}.

    \section{Chained inequalities}
    \label{app:chained-ineqs}
That the Mermin inequality (resp.\ causal Mermin inequality~\eqref{eq:causal-mermin-ineq}) is violated up to the algebraic maximum (i.e.\ the maximum value among all valid probability distributions) implies that no fraction of the distribution allows a local deterministic hidden variable model (resp.\ definite causal order model satisfying relativistic causality).
However, this only implies that it is never the case that all of the three switches have a causal order simultaneously.
Here we show that a similar result can be derived for a single quantum switch, by using the idea of chained Bell inequalities introduced by Braunstein and Caves~\cite{BC90}, which are violated to the algebraic maximum in the limit of a large number of measurement settings $N$.

We first recall Braunstein and Caves's chained inequalities as they apply to Bell nonlocality.
Consider a scenario with two spacelike-separated parties $\pA$ and $\pB$, having inputs $x,y\in\{0,1,\dots,N-1\}$ and outputs $a,b\in\{0,1\}$.
Any choice of a pair of values for $x$ as well as for $y$ can be used to define a Clauser-Horne-Shimony-Holt (CHSH) expression: more precisely, for any probability distribution $R(a b | x y)$ and $\xi,\xi',\upsilon,\upsilon'\in\{0,1,\dots,N-1\}$, define
\begin{equation}
    \begin{split}
        \CHSH_{\xi,\xi';\upsilon,\upsilon'}[R] &\coloneqq
        R(a = b \mid x=\xi, y=\upsilon) \\
        &+ R(a = b \mid x=\xi', y=\upsilon) \\
        &+ R(a = b \mid x=\xi', y=\upsilon') \\
        &- R(a = b \mid x=\xi, y=\upsilon'). \\
    \end{split}
\end{equation}
With this convention for the CHSH expression, the classical bound is 2, the quantum bound $1+\sqrt 2$, and the algebraic bound is 3.
Summing an appropriate set of these CHSH expressions together defines the Braunstein-Caves expression
\begin{equation}
    \BC_N[R] \coloneqq \sum_{i=0}^{N-2} \CHSH_{0,i+1;i,i+1}[R] \label{eq:bc-as-chsh-sum}
\end{equation}
which after some rewriting becomes
\begin{multline}
    \BC_N[R] = \sum_{i=0}^{N-2} \left[ R(a = b \mid x=i,y=i) + R(a = b \mid x=i+1,y=i) \right] \\
    + R(a=b|x=N-1,y=N-1) - R(a=b|x=0,y=N-1). \label{eq:bc-expanded}
\end{multline}

With the classical bound for each of the CHSH expressions being 2, the overall classical bound on $\BC_N[R]$ can be read off from~\eqref{eq:bc-as-chsh-sum} to be
\begin{equation}
    \BC_N[R] \leq 2N-2.
\end{equation}
The algebraic maximum of $\BC_N[R]$, on the other hand, can seen from~\eqref{eq:bc-expanded} to be $2N-1$.

A high quantum value for $\BC_N$ can be achieved by letting $\pA$ and $\pB$ measure the two parts of the Bell state $\ket{\Phi^+}$ in the measurement directions shown in Figure~\ref{fig:bc-directions}.
The directions are spaced an angle $\theta = \pi/2N$ apart on the Bloch sphere.
All terms in the expansion~\eqref{eq:bc-expanded} except the last involve a neighbouring pair of measurement directions, yielding probabilities of $\cos^2 \theta/2$. In total, the quantum value is
\begin{equation}
    \label{eq:bc-vanilla-quantum-value}
    \begin{split}
        \BC_N[R] &= (2N-1) \cos^2\frac{\theta}{2} - \left(1-\cos^2\frac{\theta}{2}\right) \\
        &= 2N \cos^2\frac{\theta}{2} - 1
        = N\left(\cos\frac{\pi}{2N} +1\right) - 1 \quad\xrightarrow{N\gg 1}\quad 2N-1,
    \end{split}
\end{equation}
which approaches the algebraic maximum in the limit of large $N$.

A crucial observation is that the classical bound of 2 for a CHSH expressions follows (under additional requirements sometimes called measurement independence and parameter independence) not only from the assumption that all measurements outcomes are predetermined, but also from the weaker requirement that only $\pA$'s outcome $a$ for one of her inputs $x$ is predetermined.
Because all CHSH expressions in~\eqref{eq:bc-as-chsh-sum} involve the input $x=0$, this means that the classical bound $\BC_N[R]\leq 2N-2$ also holds when only the $x=0$ measurement is predetermined.
This will allow us to make the translation from Bell nonlocality to indefinite causal order in the quantum switch, by using a quantum switch to simulate the $x=0$ measurement (in the spirit of Observation~\ref{fct:x-are-0} in the main text), while noting that the definite causal order assumption implies predetermination of that measurement (in the spirit of Lemma~\ref{lma}).

The relevant quantum switch scenario is depicted in Figure~\ref{fig:switchy-bc}.
The quantum switch is now controlled in the $Z$ basis (rather than $X$, as in the main text).
$\pA_1$ and $\pA_2$ perform the same quantum instruments as described in the main text and Appendix~\ref{app:switch-data}.
$\pA_3$ and $\pB$ perform measurements on the output control qubit and a qubit $B$ entangled to the input control qubit, respectively.
Their inputs $x_3,y$ take values in $\{0,1,\dots,N\}$ and their measurement directions are identical to those in the usual chained inequality scenario just described (Figure~\ref{fig:bc-directions}).
All other variables are binary, as in the main text.
This setup defines a conditional probability distribution $Q(a_1 a_2 a_3 b | x_1 x_2 x_3 y)$, which in turn defines an unconditional probability distribution $Q(a_1 a_2 a_3 b x_1 x_2 x_3 y)$ by assuming uniform distribution of the inputs: $Q(x_1 x_2 x_3 y) = 1/4N^2$.

In the main text, our first step was to find a subset of the quantum switch data that exhibited GHZ-like correlations (i.e.\ going from~\eqref{eq:ghz-cors} to~\eqref{eq:switchy-ghz-cors}).
Likewise, here we will need to find a subset of the data $Q$ that behaves as $R$ in~\eqref{eq:bc-vanilla-quantum-value} above, i.e.\ which violates the Braunstein-Caves inequality to the algebraic maximum in the limit of large $N$.
To do this, note that one can consider $\cA_3$ and $\cB$ as performing a Braunstein-Caves test on the shared entangled state $\ket{\Phi^+}$, with two differences.
First of all, when $x_3=0$, we need to ensure that $x_1=x_2=1$ and let $\cA_3$ output $a_1$, as that is a simulation of the computational basis measurement (Observation~\ref{fct:x-are-1}).
When $x_3\neq 0$, on the other hand, we need to ensure that $x_1=x_2=0$, so that the outcome $a_3$ is not disturbed by the presence of the quantum switch (Observation~\ref{fct:x-are-0}).
The data in $Q$ that are relevant to the derivation and violation of Braunstein-Caves inequalities is therefore summarised in the distribution $R_Q$ defined in~\eqref{eq:r_p} below.

\begin{figure}
    \centering

    \begin{minipage}{.45\textwidth}
        \centering
        \vspace{5em}
        \tikzfig{tikzit/bc-directions}
        \vspace{4.9em}
        \caption{Directions of qubit measurements in the $Z$-$X$ plane that yield a value of the Braunstein-Caves quantity $\BC_N$ converging to the algebraic maximum.}
        \label{fig:bc-directions}
    \end{minipage}%
    \hspace{.1\textwidth}%
    \begin{minipage}{.45\textwidth}
        \centering
        \tikzfig{tikzit/switchy-braunstein-caves}
        \caption{Quantum switch setup producing correlations that violate the causal Braunstein-Caves inequality~\eqref{eq:switchy-bc-ineq} to the algebraic maximum in the limit of large $N$. The measurement directions of $\cA_3$ and $\cB$ are as in Figure~\ref{fig:bc-directions}.}
        \label{fig:switchy-bc}
    \end{minipage}
\end{figure}

We then get the following theorem. As in the main text, the conditions~\eqref{eq:chained-conditions} are motivated by the assumptions of free interventions, definite causal order, and relativistic causality.

\begin{theorem}
    \label{thm:causal-bc-app}
    Let $\la$ be a random variable with domain $\{0,1\}$.
    Suppose the probability distribution $P(a_1 a_2 a_3 b x_1 x_2 x_3 y \la)$ satisfies
    \begin{subequations}
        \label{eq:chained-conditions}
        \begin{align}
            \forall x_1 x_2 x_3 y: P(x_1 x_2 & x_3 y)  >0; \label{eq:bc-unif-settings}\\
            \la \indep_P x_1 x_2 & x_3 y ; \\
            a_1\indep_P x_2 \mid \la=0;  \qquad & a_2\indep_P x_1 \mid \la=1; \label{eq:bc-dco} \\
            \la b \indep_P x_1 x_2 x_3 \mid y; \qquad \la a_1 a_2 a_3 \indep_P y &\mid x_1 x_2 x_3; \qquad \la a_1 a_2 \indep_P x_3 y \mid x_1 x_2,  \label{eq:bc-rc}
        \end{align}
    \end{subequations}
    and let
    \begin{equation}
        \label{eq:r_p}
        R_P(a b \la \mid x_3 y) \coloneqq
        \begin{cases}
            P(a_1 = a, b, \la \mid x_1=x_2=1, x_3, y) &\quad\text{if } x_3 = 0; \\
            P(a_3 = a, b, \la \mid x_1=x_2=0, x_3, y) &\quad\text{if } x_3\neq 0
        \end{cases}
    \end{equation}
    and
    \begin{equation}
        \a[P] \coloneqq P(a_1=1 \mid x_1 x_2=10) + P(a_2=1 \mid x_1 x_2=01) + P(a_1 a_2=00 \mid x_1 x_2=11).
    \end{equation}
    Then $P$ satisfies the inequality
    \begin{equation}
        \label{eq:switchy-bc-ineq}
        \BC_N[R_P] - (2N-2) \alpha[P] \leq 2N-2.
    \end{equation}
    The quantum value of this expression is
    \begin{equation}
        \label{eq:bc-switch-quantum-value}
        \BC_N[R_Q] - (2N-2) \alpha[Q] = N\left(\cos\frac{\pi}{2N} +1\right) - 1 \quad\xrightarrow{N\gg 1}\quad 2N-1.
    \end{equation}
\end{theorem}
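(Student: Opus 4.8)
The plan is to mirror the two-part structure of the proof of Theorem~\ref{thm:causal-mermin}: first extract determinism from the definite-causal-order conditions, then feed it into a chained Bell argument. I will treat the inequality~\eqref{eq:switchy-bc-ineq} and the quantum value~\eqref{eq:bc-switch-quantum-value} separately. For the inequality, the first step is to reprove the determinism Claim from the proof of Theorem~\ref{thm:causal-mermin} in the present single-switch setting: conditions~\eqref{eq:bc-dco} imply $P(a_1=\la\mid x_1=x_2=1)\ge 1-\a[P]$, by splitting $P$ into its $\la=0$ and $\la=1$ parts and using $a_1\indep x_2\mid\la=0$ and $a_2\indep x_1\mid\la=1$ exactly as before. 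The point of $R_P$ in~\eqref{eq:r_p} is that its $x_3=0$ column is precisely the outcome $a_1$ (with $x_1=x_2=1$), so this Claim says that column is, up to error $\a[P]$, the predetermined variable $\la$.

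Second, I would substitute $a_1\to\la$ in the $x_3=0$ terms of $\BC_N[R_P]$. Writing $\BC_N$ as the sum of $N$ CHSH expressions~\eqref{eq:bc-as-chsh-sum}, each summand $\CHSH_{0,i;i-1,i}$ contains exactly two terms conditioned on $x_3=0$ (one with each sign). Replacing $a_1$ by $\la$ in such a term changes it by at most $P(a_1\neq\la\mid x_1=x_2=1,x_3=0,y)=P(a_1\neq\la\mid x_1=x_2=1)\le\a[P]$, where the first equality uses the third relativistic-causality condition in~\eqref{eq:bc-rc} (that $\la a_1 a_2$ is independent of $x_3,y$ given $x_1,x_2$) to drop the extra conditioning, and the inequality is the Claim. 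Summing the $2N$ such perturbations gives $\BC_N[R_P]\le\BC_N[R_P']+2N\,\a[P]$, where $R_P'$ is obtained from $R_P$ by using $\la$ instead of $a_1$ in its $x_3=0$ column; this accounts exactly for the $-2N\,\a[P]$ term in~\eqref{eq:switchy-bc-ineq}.

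The crux is then to show $\BC_N[R_P']\le 2N$. Here I would invoke the observation already flagged in the text: a single CHSH expression is bounded by $2$ as soon as one party's outcome for one of its settings is predetermined, provided parameter independence (no-signalling) and measurement independence hold. I would establish this bound by hand, the key point being that no-signalling forces $\pB$'s marginals in the two $x_3=0$ correlators to coincide with those entering the $x_3\neq 0$ correlators, coupling the four terms tightly enough to cap the expression at $2$. Conditioning on $\la$ fixes the $x_3=0$ outcome, so, since every summand $\CHSH_{0,i;i-1,i}$ of~\eqref{eq:bc-as-chsh-sum} shares that predetermined setting, each is $\le 2$ and $\BC_N(\la)\le 2N$; averaging over $\la$, which is independent of the settings, preserves the bound. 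The main obstacle I anticipate is verifying no-signalling for $R_P'(\cdot\mid\cdot,\la)$ across the ``stitch'' in~\eqref{eq:r_p}: its $x_3=0$ column is taken at $x_1=x_2=1$ but its $x_3\neq 0$ columns at $x_1=x_2=0$, so changing $\pA$'s effective setting also changes the switch inputs. This is exactly why $\pB$'s independence is asserted against all of $x_1x_2x_3$ in~\eqref{eq:bc-rc}; with $\la b\indep x_1x_2x_3\mid y$ in hand, $b$'s distribution given $y,\la$ is unchanged across the stitch, while $\pA$'s marginal is independent of $y$ by $\la a_1 a_2 a_3\indep y\mid x_1x_2x_3$, so the box is no-signalling as required.

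Finally, for the quantum value~\eqref{eq:bc-switch-quantum-value}, I would argue that $R_Q$ reproduces the ordinary chained-Bell box whose value is computed in~\eqref{eq:bc-vanilla-quantum-value}. By the $Z$-controlled analogues of Lemmas~\ref{lma:obs-1} and~\ref{lma:obs-2}, the action of $\cA_3$ together with the switch simulates a measurement of the input control qubit: in the computational basis when $x_3=0$ (reading out $a_1$) and in the direction indexed by $x_3$ when $x_3\neq 0$ (reading out $a_3$, the switch reducing to the identity). Against $\pB$'s measurements on the shared $\ket{\Phi^+}$ this is precisely the chained-Bell scenario of Figure~\ref{fig:bc-directions}, so $\BC_N[R_Q]$ equals~\eqref{eq:bc-vanilla-quantum-value}. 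Moreover the quantum instruments satisfy the conditions~\eqref{eq:lma-data} of Lemma~\ref{lma} exactly, so each of the three terms defining $\a[Q]$ vanishes; hence $\a[Q]=0$ and $\BC_N[R_Q]-2N\,\a[Q]=(N+1)(\cos\tfrac{\pi}{N+1}+1)-1$.
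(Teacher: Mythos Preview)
Your proposal is correct and follows the same overall logic as the paper's proof: verify that $R_P$ inherits measurement and parameter independence from~\eqref{eq:chained-conditions}, use the Claim (identical to the one in the proof of Theorem~\ref{thm:causal-mermin}) to link $a_1$ to $\la$ up to error $\a[P]$, and then bound each of the $N$ CHSH summands. The quantum-value part is handled identically.

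The one structural difference is in how the CHSH bound is obtained. You first substitute $\la$ for $a_1$ in the two $x_3=0$ terms of each $\CHSH_{0,i;i-1,i}$, absorbing an error of at most $2\a[P]$ per summand, and then bound the resulting expression by $2$ by proving from scratch that a no-signalling box with one deterministic setting satisfies CHSH $\leq 2$. The paper instead applies a monogamy inequality from~\cite{BKP06} directly to $R_P$ (viewed as a bipartite box with hidden variable $\la$), obtaining $\CHSH_{0,i;i-1,i}[R_P]\leq 4-2R_P(a=\la\mid x_3=0)\leq 2+2\a[P]$ in a single stroke, without the intermediate substitution or the hand-proof of the deterministic CHSH bound. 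Your route is more self-contained but longer; the paper's is shorter because the cited monogamy result already packages together the determinism-to-CHSH trade-off that you unbundle into two steps.
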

\begin{proof}
    $R_P$ satisfies measurement and parameter independence, i.e.\ the conditions
    \begin{equation}
        \la \indep_{R_P} x_3 y, \quad a \indep_{R_P} y \mid x_3\la,\quad\text{ and }\quad b\indep_{R_P} x_3 \mid y\la.
    \end{equation}
    This means we can use the monogamy inequality of Ref.~\cite{BKP06} to obtain
    \begin{equation}
        \CHSH_{0,i+1;i,i+1}[R_P] \leq 4 - 2 R_P(a=\la|x_3=0) \quad\text{ for any } i\in\{0,1,\dots,N-1\}.
    \end{equation}
    But by definition of $R_P$ and the definite causal order assumption~\eqref{eq:bc-dco}, we get, via an argument similar to that for~\eqref{eq:a1-la-alpha},
    \begin{equation}
        R_P(a = \la | x_3=0) = P(a_1 = \la | x_1=x_2=1) \geq 1-\alpha[P].
    \end{equation}
    Thus
    \begin{equation}
        \CHSH_{0,i+1;i,i+1}[R_P] - 2\a[P] \leq 2,
    \end{equation}
    and summing over $i$ implies the sought-after inequality~\eqref{eq:switchy-bc-ineq}.

    The quantum switch correlations $Q$, on the other hand, yield---by virtue of versions of Observations~\ref{fct:x-are-0} and~\ref{fct:x-are-1} or Lemmas~\ref{lma:obs-1} and~\ref{lma:obs-2} in Appendix~\ref{app:switch-data}---a value of $\BC_N[R_Q]$ identical to that given in~\eqref{eq:bc-vanilla-quantum-value}, and a value of 0 for $\alpha[Q]$, thus giving~\eqref{eq:bc-switch-quantum-value}.
\end{proof}

\begin{corollary}
    Suppose $Q(a_1 a_2 a_3 b x_1 x_2 x_3 y)$ satisfies~\eqref{eq:bc-switch-quantum-value} and also admits a hidden variable model $Q(a_1 a_2 a_3 b x_1 x_2 x_3 y \la)$ satisfying~\eqref{eq:chained-conditions}, but where $\la$ now ranges over $\{0,1,\bot\}$.
    Here $\la = \bot$ denotes the absence of a (relativistically well-behaved) definite causal order.
    Then the fraction of runs with a definite causal order is
    \begin{equation}
        Q(\la\in\{0,1\}) \leq 2N - N\left( \cos{\frac{\pi}{2N}} + 1 \right) \to 0 \quad\text{ as }\quad N\to\infty.
    \end{equation}
\end{corollary}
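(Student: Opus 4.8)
The plan is to condition on the value of $\la\in\{0,1,\bot\}$ and bound the quantity $\BC_N[R_Q]-2N\alpha[Q]$ runwise: on the runs carrying a definite (relativistically well-behaved) causal order, $\la\in\{0,1\}$, by Theorem~\ref{thm:causal-bc-app}, and on the remaining runs, $\la=\bot$, merely by the algebraic maximum of the Braunstein--Caves expression. Matching the resulting upper bound against the exact quantum value~\eqref{eq:bc-switch-quantum-value} then pins down the definite-causal-order fraction $Q(\la\in\{0,1\})$.

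First I would use the input independence $\la\indep_Q x_1x_2x_3y$ from~\eqref{eq:chained-conditions} to write both functionals as convex combinations over $\la$. Because this independence makes $Q(\la=\ell)$ the same conditional on any inputs, every conditional probability appearing in $\BC_N[R_Q]$ and in $\alpha[Q]$ splits as $R_Q(a=b\mid x_3y)=\sum_\ell Q(\la=\ell)\,R_Q(a=b\mid x_3y,\la=\ell)$, and analogously for the $\alpha$-terms; since $\BC_N$ and $\alpha$ are linear in these conditional probabilities, I obtain
\[
    \BC_N[R_Q]-2N\alpha[Q]=\sum_{\ell\in\{0,1,\bot\}}Q(\la=\ell)\left(\BC_N[R_{Q^\ell}]-2N\alpha[Q^\ell]\right),
\]
where $Q^\ell$ denotes $Q$ conditioned on $\la=\ell$ and $R_{Q^\ell}$ is the associated data distribution~\eqref{eq:r_p}.

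Next I would bound each summand. For $\ell\in\{0,1\}$ the distribution $Q^\ell$ has $\la$ constant and still satisfies every hypothesis of~\eqref{eq:chained-conditions}: the free-inputs condition survives because conditioning on $\la$ leaves the input marginal unchanged, the $\la$-independence becomes trivial, the relevant definite-causal-order statement in~\eqref{eq:bc-dco} is exactly the original one conditioned on $\la=\ell$ (the other being vacuous), and each relativistic-causality statement in~\eqref{eq:bc-rc} descends to $Q^\ell$ since a joint independence $\la X\indep_Q Y\mid C$ implies $X\indep_Q Y\mid C,\la$. Theorem~\ref{thm:causal-bc-app} then yields $\BC_N[R_{Q^\ell}]-2N\alpha[Q^\ell]\le 2N$ for $\ell\in\{0,1\}$. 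For $\ell=\bot$ there is no causal-order constraint, so I would simply invoke the algebraic maximum $\BC_N[R_{Q^\bot}]\le 2N+1$ (read off from the expansion~\eqref{eq:bc-expanded}) together with $\alpha[Q^\bot]\ge0$, giving $\BC_N[R_{Q^\bot}]-2N\alpha[Q^\bot]\le 2N+1$.

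Finally, writing $f\coloneqq Q(\la\in\{0,1\})$ and substituting these three bounds into the decomposition gives $\BC_N[R_Q]-2N\alpha[Q]\le 2Nf+(2N+1)(1-f)=2N+1-f$. But the left-hand side equals the quantum value $(N+1)(\cos\tfrac{\pi}{N+1}+1)-1$ from~\eqref{eq:bc-switch-quantum-value}, so rearranging for $f$ produces $f\le(N+1)\bigl(1-\cos\tfrac{\pi}{N+1}\bigr)$, and the stated limit follows from $1-\cos\theta\sim\theta^2/2$ as $\theta\to0$. The only genuinely delicate step is verifying that $Q^{\la=0}$ and $Q^{\la=1}$ inherit the entire list~\eqref{eq:chained-conditions} so that Theorem~\ref{thm:causal-bc-app} applies to each; everything else reduces to linearity of $\BC_N$ and $\alpha$ and elementary arithmetic.
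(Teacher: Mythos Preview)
Your proof is correct and follows essentially the same route as the paper: decompose $Q$ according to the value of $\la$, apply Theorem~\ref{thm:causal-bc-app} on the definite-order part, use the algebraic maximum $2N+1$ on the $\la=\bot$ part, and compare with the quantum value~\eqref{eq:bc-switch-quantum-value}. The only cosmetic difference is that the paper groups the two definite-order values together and applies Theorem~\ref{thm:causal-bc-app} once to $Q^{\{0,1\}}\coloneqq Q(\,\cdot\mid\la\in\{0,1\})$, whereas you split further into $Q^0$ and $Q^1$ and apply the theorem to each; the paper's grouping makes the verification that the hypotheses~\eqref{eq:chained-conditions} are inherited slightly more immediate (since $\la$ still genuinely ranges over $\{0,1\}$), but your check that the degenerate-$\la$ distributions also satisfy them is valid.
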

\begin{proof}
    Write $Q$ as the convex sum $Q = Q(\la\in\{0,1\}) Q^{\{0,1\}} + Q(\la = \bot) Q^\bot$, where
    \begin{align}
        Q^{\{0,1\}}(a_1 a_2 a_3 b x_1 x_2 x_3 y \la) &\coloneqq Q(a_1 a_2 a_3 b x_1 x_2 x_3 y \la \mid \la \in\{0,1\}) \\
        \text{and\quad} Q^\bot(a_1 a_2 a_3 b x_1 x_2 x_3 y \la) &\coloneqq Q(a_1 a_2 a_3 b x_1 x_2 x_3 y \la \mid \la =\bot).
    \end{align}
    Theorem~\ref{thm:causal-bc-app} gives
    \begin{equation}
        \BC_N\left[R_{Q^{\{0,1\}}}\right] - (2N-2) \operatorname{\alpha}\left[Q^{\{0,1\}}\right] \leq 2N-2.
    \end{equation}
    Because the algebraic maximum of $\BC_N$ is $2N-1$ and the inequality is linear, this implies
    \begin{align}
        \BC_N\left[R_Q\right] - (2N-2) \operatorname{\alpha}\left[Q\right] &\leq Q(\la\in\{0,1\}) (2N-2) + Q(\la=\bot) (2N-1) \\
        &= 2N - 1 - Q(\la \in\{0,1\}).
    \end{align}
    (Note that in the extreme case where one imposes that a definite causal order exists on each run, i.e.\ $Q(\la\in\{0,1\})=1$, the bound in~\eqref{eq:switchy-bc-ineq} is recovered.)
    Inserting~\eqref{eq:bc-switch-quantum-value} and rearranging now gives the result.
\end{proof}

\end{document}